\documentclass[twocolumn,pra,superscriptaddress,tightlines]{revtex4-2}
\pdfoutput=1
\usepackage{amsmath}
\usepackage{amssymb}
\usepackage{amsfonts}
\usepackage{amsthm}
\usepackage{mathrsfs}
\usepackage{bm, dsfont}
\usepackage{booktabs}
\usepackage{multirow}
\usepackage[usenames,dvipsnames]{color}
\usepackage{colortbl}
\usepackage[table]{xcolor}
\usepackage{enumitem}
\usepackage{dcolumn}
\usepackage{epstopdf}
\usepackage{graphicx}
\usepackage{natbib}
\usepackage[colorlinks=true,linkcolor=blue,citecolor=blue,urlcolor=blue]{hyperref}
\usepackage{hypcap}
\usepackage{verbatim, float}
\usepackage{psfrag}
\usepackage[normalem]{ulem}
\usepackage{physics}		
\usepackage{siunitx, ifsym, marvosym}
\usepackage{cleveref}

\newcommand{\id}{\mathds{1}}

\newcommand{\cE}{\mathcal{E}}
\newcommand{\cF}{\mathcal{F}}
\newcommand{\cH}{\mathcal{H}}
\newcommand{\cM}{\mathcal{M}}

\renewcommand{\t}[1]{\mathrm{#1}}
\newcommand{\be}{\begin{equation}}
\newcommand{\ee}{\end{equation}}

\newcommand{\nc}{\textup{\,\o\,}}

\newtheorem{lemma}{Lemma}

\definecolor{navyblue}{rgb}{0,0.43,0.72}

\begin{document}

\title{Towards the device-independent certification of a quantum memory}

\author{Pavel Sekatski}
\affiliation{Department of Applied Physics, University of Geneva, Geneva, Switzerland}
\author{Jean-Daniel Bancal}
\affiliation{Université Paris-Saclay, CEA, CNRS, Institut de physique théorique, 91191, Gif-sur-Yvette, France}

\author{Marie Ioannou}
\affiliation{Department of Applied Physics, University of Geneva, Geneva, Switzerland}
\author{Mikael Afzelius}
\affiliation{Department of Applied Physics, University of Geneva, Geneva, Switzerland}
\author{Nicolas Brunner}
\affiliation{Department of Applied Physics, University of Geneva, Geneva, Switzerland}
\begin{abstract}
    Quantum memories represent one of the main ingredients of future quantum communication networks. Their certification is therefore a key challenge. Here we develop efficient certification methods for quantum memories. Considering a device-independent approach, where no a priori characterisation of sources or measurement devices is required, we develop a robust self-testing method for quantum memories. We then illustrate the practical relevance of our technique in a relaxed scenario by certifying a fidelity of $0.87$ in a recent solid-state ensemble quantum memory experiment. More generally, our methods apply for the characterisation of any device implementing a qubit identity quantum channel. 
\end{abstract}
\maketitle

\section{Introduction}

A quantum memory is a central element for future quantum technologies, in particular for quantum networks~\cite{Kimble2008,Simon2017,wehner2018quantum}. Such a device allows to locally store, and relay  quantum information carried by travelling photons. Quantum memories typically involve an atomic media, interfaced with light, and a broad range of experimental platforms are currently explored, see e.g.~\cite{Tittel2010, Sangouard2011}.

A key challenge towards the realisation of a practical quantum memory, and more generally all quantum technologies, is their certification. That is, given an actual device, consisting of a complex physical system, how can we ensure its correct functioning. Moreover, this certification procedure will involve other devices, e.g.~a source for producing specific quantum states of light as well as a measurement device, which may feature their own technical imperfections and must therefore also be characterized. Attempting to certify these devices on their own would require access to other certified devices, and so on. 

Remarkably, it turns out that there exist an elegant solution out of this seemingly daunting task. Quantum theory allows for certification techniques that are ``device-independent''. That is, the correct operation of a quantum device can be verified without requiring an a priori certification of any of the devices used in the protocol (including a possible source and measurement devices); rather, it is inferred directly from the statistics observed in a black-box scenario. When certifying a device amounts to characterize it fully, this concept is known as self-testing \cite{MayersYao,SupicReview}.

So far, these ideas have been almost exclusively investigated from a purely theoretical and abstract point of view; see \cite{Tan2017,Bancal_2021} for first experiments. In recent years, a considerable understanding of the possibilities and limits of self-testing protocols has been acquired, see e.g.~\cite{magniez,McKague2012,Yang,Yang14,jed,Coladangelo_2017,Bancal_BSM,Renou2018,Tavakoli2018,Mancinska23}. While self-testing protocols typically involve the certification of certain entangled states and some sets of local measurements, methods for the self-testing of quantum channels have recently been developed~\cite{blocks,Wagner2020}. Channel certification in partially device-independent scenarios have been discussed as well~\cite{Pusey2015,DallArno2017,Rosset2018,Yu2021}.

In this work, we demonstrate how self-testing techniques can be applied in practice to certify a real-world quantum channel such as a solid-state quantum memory. The scheme requires a source of entangled photon pairs. One photon is then stored in the quantum memory. Finally, local measurements are performed on the photon retrieved from the quantum memory, as well as on the other twin photon. Based on the measurement statistics, a quantitative statement about the quality of the quantum memory is made. More precisely, we provide a lower bound both on the fidelity of the implemented quantum memory with respect to an ideal one and on its efficiency. For deterministic quantum memories, our fidelity bound is tight. In principle, these bounds do not require any assumption about the source, the memory, or the measurement devices, but they are only conclusive when the rate of particle loss is low enough, a regime that is out of reach of most current experiments. Therefore, we proceed to show how an additional simple assumption on the physics of the measurement devices allows one to obtain bounds that are applicable to setups with low detection efficiency as well. Importantly, the presented methods can be used for any device or protocol aiming to implement the identity channel on a qubit.

\section{The quality of a quantum memory}

To present our protocol for certifying a quantum memory, we first need to specify what one expects form a ``good'' quantum memory. An ideal quantum memory is a device that at time $t_0$ receives a qubit in any state and outputs a qubit in the same state on demand at a later time $t_1$. Formally an ideal memory is thus described by an identity channel between two qubits (at different times).

A device that is closer to current technological capabilities is a lossy quantum memory. The output state is only identical with the input qubit state with probability $P_\checkmark$,  otherwise the output state is \textit{orthogonal} to the qubit subspace. For example, for photonic qubits (e.g.~polarized single-photons) the memory sometimes outputs the vacuum state (when the photon was lost). It is in principle possible to filter the output, verifying that the qubit was not lost without disturbing its state. This defines an announced quantum memory -- a two-branch quantum instrument which acts as an identity channel with probability $P_\checkmark$ and otherwise reports a failure. This is our target device.

Any realistic device will perturb the input state to some extent. To describe this, consider an instrument which maps a qubit state 
$\rho$ at $t_0$ to a (possibly different) qubit state $\t{QM}_\checkmark[\rho]$ at $t_1$ with some probability $\tr \t{QM}_\checkmark[\rho]$, and otherwise reports a failure. 
To quantify how similar it is to the ideal announced memory we look at two following properties. First, in the success branch it has to be similar to the identity channel. This can be  quantified by the conditional Choi-Jamio\l{}kowski fidelity
\be
\cF_\t{id}(\text{QM}_\checkmark) \equiv F \left( \frac{(\text{id}\otimes \text{QM}_\checkmark)[\Phi^+]}{\tr\, (\text{id}\otimes \text{QM}_\checkmark)[\Phi^+]},\Phi^+ \right),
\ee
where $\Phi^+=\ketbra{\Phi^+}$ is the maximally entangled two qubit state $\ket{\Phi^+}=\frac{1}{\sqrt{2}}\left(\ket{00}+\ket{11} \right)$, and $F$ is the fidelity between two quantum states that we define here as $F(\rho,\sigma)=\left(\tr |\sqrt{\rho} \sqrt{\sigma}|\right)^2$\footnote{One note that the definition here differs from the on in \cite{blocks}, where the quantity is not squared}. 
Second, it is also important to consider the average success probability of the device
\be
P_\checkmark [\t{QM}_\checkmark]\equiv \tr \t{QM}_\checkmark\left[\frac{1}{2}\mathds{1}\right]  =
\tr\, (\text{id}\otimes \text{QM}_\checkmark)[\Phi^+] \,,
\ee
which is one in the case of a deterministic memory. To certify a quantum memory we thus require to lower bound both quantities $\cF_\t{id}(\text{QM}_\checkmark)$ and $P_\checkmark[\t{QM}_\checkmark]$.

\section{Black-box certification}

Here we look for a ``black-box'' certification protocol, where the internal modeling of each box is a priori unknown. In a black-box description, a quantum memory is a device that maps between well identified time-delayed input and outputs system, but one ignores the form of the mapping and any characteristics of the input and output systems. Hence, a memory is given by a channel, i.e.~a completely positive trace preserving (CPTP) map,
\be\label{eq:realQM}\begin{split}
    \widetilde{\t{QM}}: B(\cH^{(i)}) &\to B(\cH^{(o)}) \\
    \rho & \mapsto \widetilde{\t{QM}}[\rho].
\end{split}
\ee
Here $\rho \in B(\cH^{(i)})$ is any state of the input system -- containing all the degrees of freedom affecting the physics of the memory at time $t_0$, and defined on some Hilbert space $\cH^{(i)}$. $\widetilde{\t{QM}}[\rho] \in B(\cH^{(o)})$ is the state of the system gathered from the memory at time $t_1$, with the  associated Hilbert space $\cH^{(o)}$. According to quantum physics any real-world device with well identified input and output systems abides to the model of Eq.~\eqref{eq:realQM}.

To certify a ``black-box'' lossy quantum memory $\widetilde{\t{QM}}$  we build on the operational framework of \cite{blocks}. Precisely, we identify CPTP injection and extraction maps
\be\begin{split}\label{eq: inj ext}
V : B(\mathds{C}^2) \to B(\cH^{(i)}) \qquad
\Lambda :  B(\cH^{(o)}) \to B(\mathds{C}^2),
\end{split}
\ee
allowing us to compare the composite channel $\Lambda \circ \widetilde{\t{QM}}\circ V $
with the target announced memory for qubits. These maps identify a way in which the actual tested device can be used in order to behave similarly to the target operation. Since we want to be able to certify memories which may have low efficiency and/or are probabilistic, we will in some cases allow the extraction $\Lambda$ and injection $V$ maps to be probabilistic, or completely positive trace non-increasing (CPTN) to be formal. In this case, the target memory is successful when all the maps are successful. In general, our goal is thus to find the maps $V$ and $\Lambda$ that maximize the values of
\begin{align}\label{eq: F mem}
\cF_\text{id}^{(\Lambda,V)}\left(\widetilde{\t{QM}}\right)&=\cF_{\t{id}}\left(\Lambda \circ \widetilde{\t{QM}}\circ V\right) \\
\label{eq: P mem}
P_\checkmark^{(\Lambda,V)}\left(\widetilde{\t{QM}}\right) &= \tr \left(\Lambda \circ \widetilde{\t{QM}}\circ V \right)\left[\frac{1}{2}\mathds{1}\right].
\end{align}

In addition to the quantum memory $\widetilde{\t{QM}}$ itself our certification setup involves an entanglement source and local measurements. The source is used to prepare a bipartite state to be stored in the memory. This state can be described by a density operator $\rho_{AB}^{(i)}\in B(\cH_A\otimes \cH^{(i)}_B)$. The memory stores the system $B$ only. After storage, the state output by the memory is of the form $\rho_{AB}^{(o)} = (\text{id}\otimes \widetilde{\t{QM}})[\rho_{AB}^{(i)}] \in B(\cH_A\otimes \cH^{(o)}_B)$. The local measurements are performed on both systems $A$ and $B$. In our protocol, each party $A$ and $B$ chooses between two binary measurements, labeled by the setting $x$ and $y=0,1$ respectively. In quantum physics these are modeled by 2-outcome positive operator valued measures (POVM): $\cM_{A}=\{A_{0|x},A_{1|x}\}$ for the system $A$, and $\cM_{B}^{(i/o)}=\{B_{0|y}^{(i/o)},B_{1|y}^{(i/o)}\}$ for the system $B$ after the source $(i)$ or after the memory $(o)$. 

In summary, the a priori quantum model of the setup involves the following elements  $\{  \rho_{AB}^{(i)},\rho_{AB}^{(o)}, \cM_A, \cM_B^{(i)},\cM_B^{(o)}\}$
defined on the Hilbert spaces $\cH_A, \cH_B^{(i)}$ and $\cH_B^{(o)}$. Given a description of all states and measurements we define the expected correlations 
\be\label{eq: P}
\t{P}^{(i/o)}(a,b|x,y)= \t{tr}\,   \big(A_{a|x}\otimes B_{b|y}^{(i/o)}\big) \rho_{AB}^{(i/o)}.
\ee
The protocol consists in repeatedly measuring at random the states prepared by the source ($\rho_{AB}^{(i)}$) or output by the memory ($\rho_{AB}^{(o)}$) with randomly chosen measurement settings $x$ and $y$. Assuming that all the trials are identical this allows one to estimate the correlations $\t{P}^{(i/o)}(a,b|x,y)$.

\section{Results}

We are now ready to formulate the main results of this paper, namely how the above setup can be used to certify and characterize a quantum memory. We consider three different scenarios depending on whether the memory and the considered maps $V$ and $\Lambda$ are deterministic or probabilistic.

In all cases, the key ingredient of the certification is the observation of quantum nonlocality via Bell inequality violation \cite{Bell,review}. More precisely, we estimate the Clauser-Horne-Shimony-Holt \cite{CHSH} scores  
\be \label{eq: CHSH scores}
S_{i/o} \equiv \sum_{a,b,x,y=0,1} (-1)^{a+b +xy}\, \text{P}^{(i/o)}(a,b|x,y). 
\ee
In general, two different CHSH scores  are evaluated, when the memory is present (case ``o’’)  and when it is bypassed (case ``i’’). These CHSH scores allow to directly lower bounds the fidelities of the entangled states $\rho_{AB}^{(i/o)}$ with respect to the ideal two-qubit Bell state. Following the noise-robust self-testing results of Ref.~\cite{jed}, the state fidelities with and without the memory are bounded by:
\be\label{eq:jed formula}
F_{i/o} \geq f(S_{i/o}), \, \text{with} \quad f(S)\equiv\frac{1}{2}\left(1+ \frac{S-s_*}{2\sqrt{2}-s_*}\right)
\ee
where $s_* = \frac{16+14\sqrt{2}}{17}\approx 2.11$. We now show how these bounds on the fidelity can by used to certify the memory channel.

\textit{Scenario 1 ---} We first consider the case of a deterministic quantum memory, i.e.~one with success probability
\be
P_o=\tr(\widetilde{\t{QM}}[\rho^{(i)}])
\ee
equal to $P_o=1$. In this case, we prove the following lower bound on the fidelity of the quantum memory as a function of the two singlet fidelities $F_{i/o}$:
\be\label{eq: F bound ideal}
\cF_{\t{id}}^{(\Lambda,V)}(\widetilde{\t{QM}}) \geq 
\begin{cases}
\frac{F_o+\sqrt{2F_o-1}}{2} & \lambda_i\geq\frac{1}{2F_o}\\
\left(\frac{\sqrt{2F_o}-(\sqrt{\lambda_i}-\sqrt{1-\lambda_i}) }{2\sqrt{1-\lambda_i}}\right)^2 & \text{otherwise}
\end{cases}
\ee
with $\lambda_i = 
\frac{1}{2}+\sqrt{F_i(1-F_i)}$ if $ F_i\geq \frac{1}{2}$ and $\lambda_i = 1$ if $F_i <\frac{1}{2}$. The proof is detailed in Appendix~\ref{sec:scenario1}. It consists in proving the existence of deterministic injection and extraction maps, given by CPTP maps $V$ and $\Lambda$ respectively, achieving the above fidelity. Since the maps are trace-preserving, this bound provides a robust self-testing of the quantum memory with $P_\checkmark^{(\Lambda,V)}=1$.

Remarkably, we also prove that when the certified memory fidelity is above 50\%, i.e.~the memory is shown to preserve some entanglement, the lower-bound given in Eq.~\eqref{eq: F bound ideal} is tight. That is, there is a realization attaining the fidelities $F_{i/o}$, such that $\max_{\Lambda,V}\cF_{\t{id}}^{(\Lambda,V)}(\widetilde{\t{QM}})$ is given by the right hand side of Eq.~\eqref{eq: F bound ideal}.

The fidelity bound is illustrated in Fig.~\ref{fig:newBound} as a function of the two observed CHSH values $S_{i/o}$ through Eq.~\eqref{eq:jed formula}, showing substantial improvement over the previously known bound~\cite{blocks}. Notably, in contrast to the previous bound, the new one is asymmetric with respect to the two CHSH scores so that a high memory fidelity can still be certified when the CHSH value without memory is low, even below the local bound or unknown. In the later case one simply sets $\lambda_i=1$.

\begin{figure}
   \centering
    \includegraphics[width=0.48\textwidth]{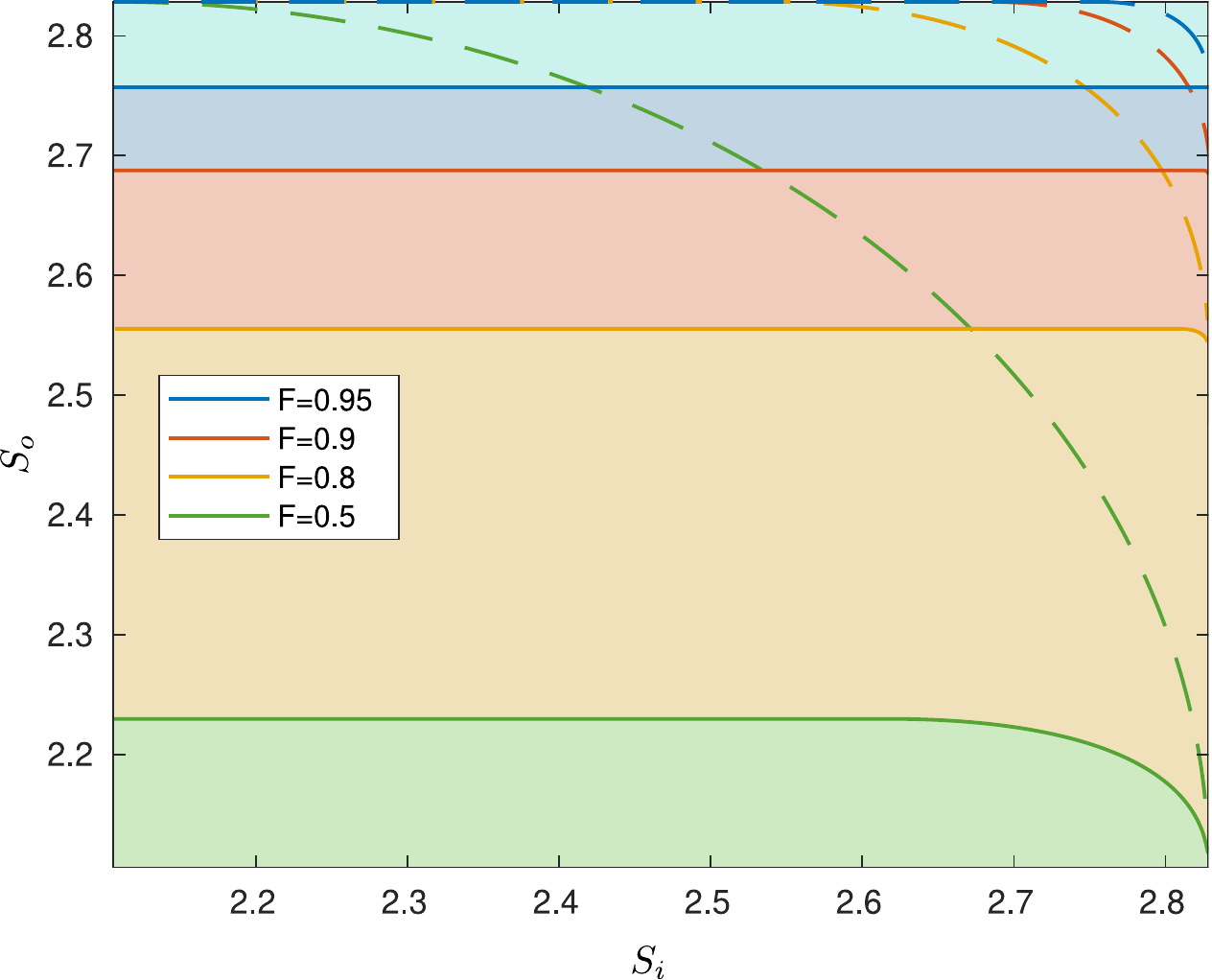}
    \caption{Fidelity of a quantum memory in absence of losses as a function of the initial and final CHSH values $S_i$ and $S_o$. The bound from~\cite{blocks} is shown in dashed lines for comparison.}
    \label{fig:newBound}
\end{figure}

\textit{Scenario 2 ---} Second, we consider the task of assessing the quality of a heralded probabilistic memory with $P_o \leq 1$ from the result of measurements performed after the storage only. In this case, we prove the following lower bounds on the quality of the quantum memory:
\be\label{eq: bound lossy}
\begin{split}
    \cF_{\t{id}}^{(\Lambda,\bar V)}(\widetilde{\t{QM}}) &\geq F_o \\
     P_{\checkmark}^{( \Lambda,\bar V)}(\widetilde{\t{QM}}) &\geq \frac{1}{2} P_o
\end{split}
\ee
for some CPTP map $\Lambda$ and CPTN map $\bar V$.

The full proof can be found in Appendix~\ref{sec:scenario2}. The result works by proving the existence of injection and extraction maps, the former is here probabilistic. Hence, the quantum state to be stored has to be filtered \textit{before} the memory by the injection map $\bar V$, with the filtering probability guaranteed to exceed 50\%. The necessity of such filtering can be understood intuitively from the realisation where the source prepares a non-maximally entangled two-qubit state $\ket{\Phi_{\lambda}} = \sqrt{\lambda}\ket{00} + \sqrt{1-\lambda}\ket{11}$ and the memory device filters it (e.g.~due to polarization dependent losses) into a maximally entangled state $\ket{\Phi^+}$, leading to a maximal CHSH score upon heralding of a successful memory storage.

\textit{Scenario 3 ---} Finally, we point out that it is also possible to assess the quality of a heralded probabilistic memory with both fidelities $F_i$ and $F_o$ (CHSH scores $S_i$ and $S_o$), and for a deterministic map $V$ and a probabilistic one $\bar \Lambda$. This is well adapted to an experiment where the CHSH score with bypassed memory $S_i$ is large. Here we derive lower bounds on $\cF_{\t{id}}^{(\bar \Lambda,V)}(\widetilde{\t{QM}})$ and $P_{\checkmark}^{( \bar \Lambda,V)}(\widetilde{\t{QM}})$ as results of a simple numerical minimization that has to be solved given the values $F_i$, $F_o$ and $P_o$. The details can be found in Appendix~\ref{sec:scenario3}.

\section{Calibration-independent certification}

In most current experiments particle losses such as finite detection efficiency of photonic detectors make black-box certification challenging. Furthermore, most of current quantum memories have a low efficiency but are not heralded, i.e.~one does not know if the memory is successful before measuring its output state. We now show how the proposed technique can be adapted to such cases via the introduction of fair-sampling assumptions on the measurement model. As discussed in~\cite{MatthiasBock}, the obtained certification retains a high degree of resistance to experimental errors.

In presence of detection inefficiency, the description of the binary measurement on system $A$ is modified to a 3-outcome POVM $\cM_{A}=\{A_{0|x},A_{1|x}, A_{\nc|x} \}$, and similarly for $\cM_{B}^{(i/o)}$. Here, the operators $A_{\nc|x}$ and $B_{\nc|y}^{(i/o)}$ are associated to the no-click events. The a priori quantum model of the setup then involves the modified elements $\{ \rho_{AB}^{(i)},\rho_{AB}^{(o)}, \cM_A, \cM_B^{(i)},\cM_B^{(o)}\}$ defined on the Hilbert spaces $\cH_A, \cH_B^{(i)}$ and $\cH_B^{(o)}$, yielding probabilities of the form of Eq.~\eqref{eq: P} with the larger alphabet $a,b=0,1,\nc$.

In this scenario, we define the post-selected correlations 
\be
\bar {\text{P}}^{(i/o)}(a,b|x,y) = \frac{\text{P}^{(i/o)}(a,b|x,y)}{\sum_{a,b=0,1}\text{P}^{(i/o)}(a,b|x,y)}
\ee
for $a,b =0,1$, estimated by simply discarding  all the events with $a=\nc $ or $b=\nc$. Post-selection opens the infamous detection loophole: when the occurrence of no-click events depends on the measurement setting, nonlocal correlation in the post-selected data can be faked by a local model~\cite{loophole1}. This possibility can be ruled out by introducing minimal physics-informed assumptions about the measurement apparatus. Following~\cite{Orsucci2020howpostselection} we introduce two assumptions of this type:
\begin{itemize}
\item[(SFS)] The measurement apparatus satisfies the strong fair-sampling assumption  if in the experiment the occurrence of no-click events is independent of the measured quantum state.
\item[(WFS)] The measurement apparatus satisfies the weak fair-sampling assumption if in the experiment the occurrence of no-click events is independent of the measurement setting.
\end{itemize}
We simply say that a measurement is SFS or WFS to mean that it satisfies the corresponding assumption. A more formal discussion of the assumptions and their implication for certification can be found in Appendix~\ref{app: assumptions}. Here we only give a brief recipe of how to use them together with the bounds derived in the previous section.

If all measurements are SFS the post-selection is essentially for free. In this case one computes the post-selected CHSH scores $\bar{S}_{i/o}$ from $\bar {\text{P}}^{(i/o)}(a,b|x,y)$ via Eq.~\eqref{eq: CHSH scores}. These values can be readily used to bound the singlet fidelities of the states $\rho_{AB}^{(i/o)}$ via Eq.~\eqref{eq:jed formula}, and the memory fidelity in Scenarios 1-3. Note however that in most experiments SFS is hard to justify for the measurement performed on the memory output. Indeed, if the photon can get lost inside the memory leading to a no-click outcome, the occurrence of the latter is manifestly not independent of the state that comes out of the memory.

If a measurement $\cM$ is WFS it can be decomposed as a filter $R$, which may produce a no-click outcome $\nc$, followed by an efficient measurement $\bar \cM$ which only outputs  $0$ or $1$. Thus, if a state $\rho_{AB}^{(i/o)}$ and inefficient WFS measurements $\cM_A$ and $\cM_{B}^{(i/o)}$ give rise to statistics $\text{P}^{(i/o)}(a,b|x,y)$, the post-selected statistics $\bar {\text{P}}^{(i/o)}(a,b|x,y)$ can be associated to the filtered state $\bar \rho_{AB}^{(i/o)} = \frac{R_A\otimes R_B^{(i/o)}[\rho_{AB}^{i/o}]}{R_A\otimes R_B^{(i/o)}[\rho_{AB}^{i/o}]}$ and to measurements $\bar \cM_A$ and $\bar \cM_B^{(i/o)}$ with unit efficiency. This allows one to use scenarios 2 and 3 in order to certify the probabilistic memory channel $R_B^{(o)}\circ \widetilde{\t{QM}}$. It is more elegant to absorb the filter $R_B^{(o)}$ in the extraction map, and view this task as certification of the memory channel $\widetilde{\t{QM}}$ with a probabilistic map $\bar \Lambda$.
In particular, for scenario 2 we then find 
\be\begin{split}
\cF_{\t{id}}^{(\bar \Lambda,\bar V)}(\widetilde{\t{QM}})&\geq f(\bar S_o)\\
P_{\checkmark}^{( \bar \Lambda,\bar V)}(\widetilde{\t{QM}}) &\geq \frac{1}{2} \t{P}^{(o)}(b\neq \nc| a\neq \nc).
\end{split}
\ee
Here the extraction map $\bar \Lambda$ is allowed to filter out the loss branch of the memory. This approach is thus suitable to certify non-heralded low efficiency memories, as we now illustrate.

\section{Experimental illustration}

To illustrate the practical relevance of our results, we apply our certification methods to the experimental data reported in~\cite{tiranov2015}. There half of a polarization and energy-time hyper-entangled photonic state is stored in an atomic ensemble quantum memory. The highest CHSH scores are reported for the energy-time degree of freedom, reaching $\bar S_i =2.733$ and $\bar S_o=2.64$ \footnote{For the polarisation degree of freedom, slightly lower CHSH values have been observed}. Note that these CHSH values are obtained for the post-selected data, as the setup features only low transmission, in the range of $1 \%$.

The first step consists in obtaining lower bounds on the fidelities of the entangled states with respect to a Bell states. More precisely, from Eq.~\eqref{eq:jed formula} we obtain the bounds on the fidelities of the filtered extracted states $F_i\geq0.93$ and $F_o \geq0.87$. 

For this experiment, where transmissions are relatively low, the most adapted approach is that of Scenario 2 together with the WFS assumption. Indeed the WFS assumption can be well justified for the measurement setup~\cite{Orsucci2020howpostselection}. From the results above, we can thus certify that the experimental device works as an announced qubit quantum memory with a fidelity lower bounded by
\be
\cF_{\t{id}}^{(\bar \Lambda,\bar V)}(\widetilde{\t{QM}}) \geq 0.87
\ee
The available experimental data did not allow us to estimate the conditional detection probability $\t{P}^{(o)}(b\neq \nc| a\neq \nc)$ as only the post-selected data is reported, hence we could not derive here a lower bound on the success probability of the quantum memory.

Note that in principle one could also apply the approach of Scenario 1 (or Scenario 3) to this experimental data. This would require making the WFS assumption on Alice, but also the SFS assumption on Bob, at least in the case where the memory is bypassed (Scenario 3). While the first assumption can be justified, the second is more delicate here. Indeed the SFS assumption requires that the state received by the measurement has no influence on the occurrence of the no-click outcome. In our setup this is hard to justify because the photon-number statistics of the states is nondeterministic, and the number of incoming photons influences the click probability. In order to apply the approach of Scenario 1 to an experiment, the latter would need to have either detection efficiencies close to one or involve a deterministic source of entanglement.

\section{Conclusion }

We have developed rigorous and efficient certification and characterisation methods for quantum memories. We derived our results in a device-independent setting, where all quantum devices are treated as black boxes. To make our approach more directly applicable to current experiments, we also discussed rigorously the conclusions that could still be made when discarding inconclusive events with various forms of fair sampling assumptions. To illustrate the practical relevance of our methods, we applied them to the data of a recent experiment \cite{tiranov2015} certifying a quantum memory fidelity of 0.87, without requiring an a priori characterisation of the source of the measurement devices. This further demonstrates the practical relevance of device-independent methods, such as self-testing, for the certification of practical devices.

More generally our methods can be applied to any device that implements an identity quantum channel, for example an optical fibre or transmission line~\cite{Neves2023}, a frequency-converter~\cite{MatthiasBock} or a simple error-correcting code, and may thus find broader applications in the future.

\begin{acknowledgments}
The authors acknowledge financial support from the Swiss National Science Foundation (project $2000021\_192244/1$ and NCCR SwissMAP).
\end{acknowledgments}

\bibliography{ST}{}

\begin{thebibliography}{38}%
\makeatletter
\providecommand \@ifxundefined [1]{%
 \@ifx{#1\undefined}
}%
\providecommand \@ifnum [1]{%
 \ifnum #1\expandafter \@firstoftwo
 \else \expandafter \@secondoftwo
 \fi
}%
\providecommand \@ifx [1]{%
 \ifx #1\expandafter \@firstoftwo
 \else \expandafter \@secondoftwo
 \fi
}%
\providecommand \natexlab [1]{#1}%
\providecommand \enquote  [1]{``#1''}%
\providecommand \bibnamefont  [1]{#1}%
\providecommand \bibfnamefont [1]{#1}%
\providecommand \citenamefont [1]{#1}%
\providecommand \href@noop [0]{\@secondoftwo}%
\providecommand \href [0]{\begingroup \@sanitize@url \@href}%
\providecommand \@href[1]{\@@startlink{#1}\@@href}%
\providecommand \@@href[1]{\endgroup#1\@@endlink}%
\providecommand \@sanitize@url [0]{\catcode `\\12\catcode `\$12\catcode
  `\&12\catcode `\#12\catcode `\^12\catcode `\_12\catcode `\%12\relax}%
\providecommand \@@startlink[1]{}%
\providecommand \@@endlink[0]{}%
\providecommand \url  [0]{\begingroup\@sanitize@url \@url }%
\providecommand \@url [1]{\endgroup\@href {#1}{\urlprefix }}%
\providecommand \urlprefix  [0]{URL }%
\providecommand \Eprint [0]{\href }%
\providecommand \doibase [0]{https://doi.org/}%
\providecommand \selectlanguage [0]{\@gobble}%
\providecommand \bibinfo  [0]{\@secondoftwo}%
\providecommand \bibfield  [0]{\@secondoftwo}%
\providecommand \translation [1]{[#1]}%
\providecommand \BibitemOpen [0]{}%
\providecommand \bibitemStop [0]{}%
\providecommand \bibitemNoStop [0]{.\EOS\space}%
\providecommand \EOS [0]{\spacefactor3000\relax}%
\providecommand \BibitemShut  [1]{\csname bibitem#1\endcsname}%
\let\auto@bib@innerbib\@empty
\bibitem [{\citenamefont {Kimble}(2008)}]{Kimble2008}%
  \BibitemOpen
  \bibfield  {author} {\bibinfo {author} {\bibfnamefont {H.~J.}\ \bibnamefont
  {Kimble}},\ }\bibfield  {title} {\bibinfo {title} {The quantum internet},\
  }\href {https://doi.org/10.1038/nature07127} {\bibfield  {journal} {\bibinfo
  {journal} {Nature}\ }\textbf {\bibinfo {volume} {453}},\ \bibinfo {pages}
  {1023} (\bibinfo {year} {2008})}\BibitemShut {NoStop}%
\bibitem [{\citenamefont {Simon}(2017)}]{Simon2017}%
  \BibitemOpen
  \bibfield  {author} {\bibinfo {author} {\bibfnamefont {C.}~\bibnamefont
  {Simon}},\ }\bibfield  {title} {\bibinfo {title} {Towards a global quantum
  network},\ }\href {https://doi.org/10.1038/s41566-017-0032-0} {\bibfield
  {journal} {\bibinfo  {journal} {Nature Photonics}\ }\textbf {\bibinfo
  {volume} {11}},\ \bibinfo {pages} {678} (\bibinfo {year} {2017})}\BibitemShut
  {NoStop}%
\bibitem [{\citenamefont {Wehner}\ \emph {et~al.}(2018)\citenamefont {Wehner},
  \citenamefont {Elkouss},\ and\ \citenamefont {Hanson}}]{wehner2018quantum}%
  \BibitemOpen
  \bibfield  {author} {\bibinfo {author} {\bibfnamefont {S.}~\bibnamefont
  {Wehner}}, \bibinfo {author} {\bibfnamefont {D.}~\bibnamefont {Elkouss}},\
  and\ \bibinfo {author} {\bibfnamefont {R.}~\bibnamefont {Hanson}},\
  }\bibfield  {title} {\bibinfo {title} {Quantum internet: A vision for the
  road ahead},\ }\href@noop {} {\bibfield  {journal} {\bibinfo  {journal}
  {Science}\ }\textbf {\bibinfo {volume} {362}},\ \bibinfo {pages} {eaam9288}
  (\bibinfo {year} {2018})}\BibitemShut {NoStop}%
\bibitem [{\citenamefont {Lvovsky}\ \emph {et~al.}(2010)\citenamefont
  {Lvovsky}, \citenamefont {Sanders},\ and\ \citenamefont
  {Tittel}}]{Tittel2010}%
  \BibitemOpen
  \bibfield  {author} {\bibinfo {author} {\bibfnamefont {A.~I.}\ \bibnamefont
  {Lvovsky}}, \bibinfo {author} {\bibfnamefont {B.~C.}\ \bibnamefont
  {Sanders}},\ and\ \bibinfo {author} {\bibfnamefont {W.}~\bibnamefont
  {Tittel}},\ }\bibfield  {title} {\bibinfo {title} {Optical quantum memory},\
  }\bibfield  {journal} {\bibinfo  {journal} {Nature photonics}\ }\href
  {https://doi.org/10.1038/nphoton.2009.231} {10.1038/nphoton.2009.231}
  (\bibinfo {year} {2010})\BibitemShut {NoStop}%
\bibitem [{\citenamefont {Sangouard}\ \emph {et~al.}(2011)\citenamefont
  {Sangouard}, \citenamefont {Simon}, \citenamefont {de~Riedmatten},\ and\
  \citenamefont {Gisin}}]{Sangouard2011}%
  \BibitemOpen
  \bibfield  {author} {\bibinfo {author} {\bibfnamefont {N.}~\bibnamefont
  {Sangouard}}, \bibinfo {author} {\bibfnamefont {C.}~\bibnamefont {Simon}},
  \bibinfo {author} {\bibfnamefont {H.}~\bibnamefont {de~Riedmatten}},\ and\
  \bibinfo {author} {\bibfnamefont {N.}~\bibnamefont {Gisin}},\ }\bibfield
  {title} {\bibinfo {title} {Quantum repeaters based on atomic ensembles and
  linear optics},\ }\href {https://doi.org/10.1103/RevModPhys.83.33} {\bibfield
   {journal} {\bibinfo  {journal} {Rev. Mod. Phys.}\ }\textbf {\bibinfo
  {volume} {83}},\ \bibinfo {pages} {33} (\bibinfo {year} {2011})}\BibitemShut
  {NoStop}%
\bibitem [{\citenamefont {Mayers}\ and\ \citenamefont {Yao}(2004)}]{MayersYao}%
  \BibitemOpen
  \bibfield  {author} {\bibinfo {author} {\bibfnamefont {D.}~\bibnamefont
  {Mayers}}\ and\ \bibinfo {author} {\bibfnamefont {A.}~\bibnamefont {Yao}},\
  }\bibfield  {title} {\bibinfo {title} {Self testing quantum apparatus},\
  }\href@noop {} {\bibfield  {journal} {\bibinfo  {journal} {Quantum Info.
  Comput.}\ }\textbf {\bibinfo {volume} {4}},\ \bibinfo {pages} {273} (\bibinfo
  {year} {2004})}\BibitemShut {NoStop}%
\bibitem [{\citenamefont {{\v{S}}upi{\'{c}}}\ and\ \citenamefont
  {Bowles}(2020)}]{SupicReview}%
  \BibitemOpen
  \bibfield  {author} {\bibinfo {author} {\bibfnamefont {I.}~\bibnamefont
  {{\v{S}}upi{\'{c}}}}\ and\ \bibinfo {author} {\bibfnamefont {J.}~\bibnamefont
  {Bowles}},\ }\bibfield  {title} {\bibinfo {title} {Self-testing of quantum
  systems: a review},\ }\href {https://doi.org/10.22331/q-2020-09-30-337}
  {\bibfield  {journal} {\bibinfo  {journal} {Quantum}\ }\textbf {\bibinfo
  {volume} {4}},\ \bibinfo {pages} {337} (\bibinfo {year} {2020})}\BibitemShut
  {NoStop}%
\bibitem [{\citenamefont {Tan}\ \emph {et~al.}(2017)\citenamefont {Tan},
  \citenamefont {Wan}, \citenamefont {Erickson}, \citenamefont {Bierhorst},
  \citenamefont {Kienzler}, \citenamefont {Glancy}, \citenamefont {Knill},
  \citenamefont {Leibfried},\ and\ \citenamefont {Wineland}}]{Tan2017}%
  \BibitemOpen
  \bibfield  {author} {\bibinfo {author} {\bibfnamefont {T.~R.}\ \bibnamefont
  {Tan}}, \bibinfo {author} {\bibfnamefont {Y.}~\bibnamefont {Wan}}, \bibinfo
  {author} {\bibfnamefont {S.}~\bibnamefont {Erickson}}, \bibinfo {author}
  {\bibfnamefont {P.}~\bibnamefont {Bierhorst}}, \bibinfo {author}
  {\bibfnamefont {D.}~\bibnamefont {Kienzler}}, \bibinfo {author}
  {\bibfnamefont {S.}~\bibnamefont {Glancy}}, \bibinfo {author} {\bibfnamefont
  {E.}~\bibnamefont {Knill}}, \bibinfo {author} {\bibfnamefont
  {D.}~\bibnamefont {Leibfried}},\ and\ \bibinfo {author} {\bibfnamefont
  {D.~J.}\ \bibnamefont {Wineland}},\ }\bibfield  {title} {\bibinfo {title}
  {Chained bell inequality experiment with high-efficiency measurements},\
  }\href {https://doi.org/10.1103/PhysRevLett.118.130403} {\bibfield  {journal}
  {\bibinfo  {journal} {Phys. Rev. Lett.}\ }\textbf {\bibinfo {volume} {118}},\
  \bibinfo {pages} {130403} (\bibinfo {year} {2017})}\BibitemShut {NoStop}%
\bibitem [{\citenamefont {Bancal}\ \emph {et~al.}(2021)\citenamefont {Bancal},
  \citenamefont {Redeker}, \citenamefont {Sekatski}, \citenamefont
  {Rosenfeld},\ and\ \citenamefont {Sangouard}}]{Bancal_2021}%
  \BibitemOpen
  \bibfield  {author} {\bibinfo {author} {\bibfnamefont {J.-D.}\ \bibnamefont
  {Bancal}}, \bibinfo {author} {\bibfnamefont {K.}~\bibnamefont {Redeker}},
  \bibinfo {author} {\bibfnamefont {P.}~\bibnamefont {Sekatski}}, \bibinfo
  {author} {\bibfnamefont {W.}~\bibnamefont {Rosenfeld}},\ and\ \bibinfo
  {author} {\bibfnamefont {N.}~\bibnamefont {Sangouard}},\ }\bibfield  {title}
  {\bibinfo {title} {Self-testing with finite statistics enabling the
  certification of a quantum network link},\ }\href
  {https://doi.org/10.22331/q-2021-03-02-401} {\bibfield  {journal} {\bibinfo
  {journal} {Quantum}\ }\textbf {\bibinfo {volume} {5}},\ \bibinfo {pages}
  {401} (\bibinfo {year} {2021})}\BibitemShut {NoStop}%
\bibitem [{\citenamefont {Magniez}\ \emph {et~al.}(2006)\citenamefont
  {Magniez}, \citenamefont {Mayers}, \citenamefont {Mosca},\ and\ \citenamefont
  {Ollivier}}]{magniez}%
  \BibitemOpen
  \bibfield  {author} {\bibinfo {author} {\bibfnamefont {F.}~\bibnamefont
  {Magniez}}, \bibinfo {author} {\bibfnamefont {D.}~\bibnamefont {Mayers}},
  \bibinfo {author} {\bibfnamefont {M.}~\bibnamefont {Mosca}},\ and\ \bibinfo
  {author} {\bibfnamefont {H.}~\bibnamefont {Ollivier}},\ }\bibfield  {title}
  {\bibinfo {title} {Self-testing of quantum circuits},\ }in\ \href@noop {}
  {\emph {\bibinfo {booktitle} {International Colloquium on Automata,
  Languages, and Programming}}}\ (\bibinfo {organization} {Springer},\ \bibinfo
  {year} {2006})\ pp.\ \bibinfo {pages} {72--83}\BibitemShut {NoStop}%
\bibitem [{\citenamefont {McKague}\ \emph {et~al.}(2012)\citenamefont
  {McKague}, \citenamefont {Yang},\ and\ \citenamefont
  {Scarani}}]{McKague2012}%
  \BibitemOpen
  \bibfield  {author} {\bibinfo {author} {\bibfnamefont {M.}~\bibnamefont
  {McKague}}, \bibinfo {author} {\bibfnamefont {T.~H.}\ \bibnamefont {Yang}},\
  and\ \bibinfo {author} {\bibfnamefont {V.}~\bibnamefont {Scarani}},\
  }\bibfield  {title} {\bibinfo {title} {Robust self-testing of the singlet},\
  }\href {https://doi.org/10.1088/1751-8113/45/45/455304} {\bibfield  {journal}
  {\bibinfo  {journal} {Journal of Physics A: Mathematical and Theoretical}\
  }\textbf {\bibinfo {volume} {45}},\ \bibinfo {pages} {455304} (\bibinfo
  {year} {2012})}\BibitemShut {NoStop}%
\bibitem [{\citenamefont {Yang}\ and\ \citenamefont
  {Navascu\'es}(2013)}]{Yang}%
  \BibitemOpen
  \bibfield  {author} {\bibinfo {author} {\bibfnamefont {T.~H.}\ \bibnamefont
  {Yang}}\ and\ \bibinfo {author} {\bibfnamefont {M.}~\bibnamefont
  {Navascu\'es}},\ }\bibfield  {title} {\bibinfo {title} {Robust self-testing
  of unknown quantum systems into any entangled two-qubit states},\ }\href
  {https://doi.org/10.1103/PhysRevA.87.050102} {\bibfield  {journal} {\bibinfo
  {journal} {Phys. Rev. A}\ }\textbf {\bibinfo {volume} {87}},\ \bibinfo
  {pages} {050102} (\bibinfo {year} {2013})}\BibitemShut {NoStop}%
\bibitem [{\citenamefont {Yang}\ \emph {et~al.}(2014)\citenamefont {Yang},
  \citenamefont {V\'ertesi}, \citenamefont {Bancal}, \citenamefont {Scarani},\
  and\ \citenamefont {Navascu\'es}}]{Yang14}%
  \BibitemOpen
  \bibfield  {author} {\bibinfo {author} {\bibfnamefont {T.~H.}\ \bibnamefont
  {Yang}}, \bibinfo {author} {\bibfnamefont {T.}~\bibnamefont {V\'ertesi}},
  \bibinfo {author} {\bibfnamefont {J.-D.}\ \bibnamefont {Bancal}}, \bibinfo
  {author} {\bibfnamefont {V.}~\bibnamefont {Scarani}},\ and\ \bibinfo {author}
  {\bibfnamefont {M.}~\bibnamefont {Navascu\'es}},\ }\bibfield  {title}
  {\bibinfo {title} {Robust and versatile black-box certification of quantum
  devices},\ }\href {https://doi.org/10.1103/PhysRevLett.113.040401} {\bibfield
   {journal} {\bibinfo  {journal} {Phys. Rev. Lett.}\ }\textbf {\bibinfo
  {volume} {113}},\ \bibinfo {pages} {040401} (\bibinfo {year}
  {2014})}\BibitemShut {NoStop}%
\bibitem [{\citenamefont {Kaniewski}(2016)}]{jed}%
  \BibitemOpen
  \bibfield  {author} {\bibinfo {author} {\bibfnamefont {J.}~\bibnamefont
  {Kaniewski}},\ }\bibfield  {title} {\bibinfo {title} {Analytic and nearly
  optimal self-testing bounds for the clauser-horne-shimony-holt and mermin
  inequalities},\ }\href@noop {} {\bibfield  {journal} {\bibinfo  {journal}
  {Physical review letters}\ }\textbf {\bibinfo {volume} {117}},\ \bibinfo
  {pages} {070402} (\bibinfo {year} {2016})}\BibitemShut {NoStop}%
\bibitem [{\citenamefont {Coladangelo}\ \emph {et~al.}(2017)\citenamefont
  {Coladangelo}, \citenamefont {Goh},\ and\ \citenamefont
  {Scarani}}]{Coladangelo_2017}%
  \BibitemOpen
  \bibfield  {author} {\bibinfo {author} {\bibfnamefont {A.}~\bibnamefont
  {Coladangelo}}, \bibinfo {author} {\bibfnamefont {K.~T.}\ \bibnamefont
  {Goh}},\ and\ \bibinfo {author} {\bibfnamefont {V.}~\bibnamefont {Scarani}},\
  }\bibfield  {title} {\bibinfo {title} {All pure bipartite entangled states
  can be self-tested},\ }\bibfield  {journal} {\bibinfo  {journal} {Nature
  Communications}\ }\textbf {\bibinfo {volume} {8}},\ \href
  {https://doi.org/10.1038/ncomms15485} {10.1038/ncomms15485} (\bibinfo {year}
  {2017})\BibitemShut {NoStop}%
\bibitem [{\citenamefont {Bancal}\ \emph {et~al.}(2018)\citenamefont {Bancal},
  \citenamefont {Sangouard},\ and\ \citenamefont {Sekatski}}]{Bancal_BSM}%
  \BibitemOpen
  \bibfield  {author} {\bibinfo {author} {\bibfnamefont {J.-D.}\ \bibnamefont
  {Bancal}}, \bibinfo {author} {\bibfnamefont {N.}~\bibnamefont {Sangouard}},\
  and\ \bibinfo {author} {\bibfnamefont {P.}~\bibnamefont {Sekatski}},\
  }\bibfield  {title} {\bibinfo {title} {Noise-resistant device-independent
  certification of {B}ell state measurements},\ }\href
  {https://doi.org/10.1103/PhysRevLett.121.250506} {\bibfield  {journal}
  {\bibinfo  {journal} {Phys. Rev. Lett.}\ }\textbf {\bibinfo {volume} {121}},\
  \bibinfo {pages} {250506} (\bibinfo {year} {2018})}\BibitemShut {NoStop}%
\bibitem [{\citenamefont {Renou}\ \emph {et~al.}(2018)\citenamefont {Renou},
  \citenamefont {Kaniewski},\ and\ \citenamefont {Brunner}}]{Renou2018}%
  \BibitemOpen
  \bibfield  {author} {\bibinfo {author} {\bibfnamefont {M.~O.}\ \bibnamefont
  {Renou}}, \bibinfo {author} {\bibfnamefont {J.}~\bibnamefont {Kaniewski}},\
  and\ \bibinfo {author} {\bibfnamefont {N.}~\bibnamefont {Brunner}},\
  }\bibfield  {title} {\bibinfo {title} {Self-testing entangled measurements in
  quantum networks},\ }\href {https://doi.org/10.1103/PhysRevLett.121.250507}
  {\bibfield  {journal} {\bibinfo  {journal} {Phys. Rev. Lett.}\ }\textbf
  {\bibinfo {volume} {121}},\ \bibinfo {pages} {250507} (\bibinfo {year}
  {2018})}\BibitemShut {NoStop}%
\bibitem [{\citenamefont {Tavakoli}\ \emph {et~al.}(2018)\citenamefont
  {Tavakoli}, \citenamefont {Kaniewski}, \citenamefont {V{\'{e}}rtesi},
  \citenamefont {Rosset},\ and\ \citenamefont {Brunner}}]{Tavakoli2018}%
  \BibitemOpen
  \bibfield  {author} {\bibinfo {author} {\bibfnamefont {A.}~\bibnamefont
  {Tavakoli}}, \bibinfo {author} {\bibfnamefont {J.}~\bibnamefont {Kaniewski}},
  \bibinfo {author} {\bibfnamefont {T.}~\bibnamefont {V{\'{e}}rtesi}}, \bibinfo
  {author} {\bibfnamefont {D.}~\bibnamefont {Rosset}},\ and\ \bibinfo {author}
  {\bibfnamefont {N.}~\bibnamefont {Brunner}},\ }\bibfield  {title} {\bibinfo
  {title} {Self-testing quantum states and measurements in the
  prepare-and-measure scenario},\ }\bibfield  {journal} {\bibinfo  {journal}
  {Physical Review A}\ }\textbf {\bibinfo {volume} {98}},\ \href
  {https://doi.org/10.1103/physreva.98.062307} {10.1103/physreva.98.062307}
  (\bibinfo {year} {2018})\BibitemShut {NoStop}%
\bibitem [{\citenamefont {Mančinska}\ and\ \citenamefont
  {Schmidt}(2023)}]{Mancinska23}%
  \BibitemOpen
  \bibfield  {author} {\bibinfo {author} {\bibfnamefont {L.}~\bibnamefont
  {Mančinska}}\ and\ \bibinfo {author} {\bibfnamefont {S.}~\bibnamefont
  {Schmidt}},\ }\href@noop {} {\bibinfo {title} {Counterexamples in
  self-testing}} (\bibinfo {year} {2023}),\ \Eprint
  {https://arxiv.org/abs/2212.11572} {arXiv:2212.11572 [quant-ph]} \BibitemShut
  {NoStop}%
\bibitem [{\citenamefont {Sekatski}\ \emph {et~al.}(2018)\citenamefont
  {Sekatski}, \citenamefont {Bancal}, \citenamefont {Wagner},\ and\
  \citenamefont {Sangouard}}]{blocks}%
  \BibitemOpen
  \bibfield  {author} {\bibinfo {author} {\bibfnamefont {P.}~\bibnamefont
  {Sekatski}}, \bibinfo {author} {\bibfnamefont {J.-D.}\ \bibnamefont
  {Bancal}}, \bibinfo {author} {\bibfnamefont {S.}~\bibnamefont {Wagner}},\
  and\ \bibinfo {author} {\bibfnamefont {N.}~\bibnamefont {Sangouard}},\
  }\bibfield  {title} {\bibinfo {title} {Certifying the building blocks of
  quantum computers from {B}ell's theorem},\ }\href
  {https://doi.org/10.1103/PhysRevLett.121.180505} {\bibfield  {journal}
  {\bibinfo  {journal} {Phys. Rev. Lett.}\ }\textbf {\bibinfo {volume} {121}},\
  \bibinfo {pages} {180505} (\bibinfo {year} {2018})}\BibitemShut {NoStop}%
\bibitem [{\citenamefont {Wagner}\ \emph {et~al.}(2020)\citenamefont {Wagner},
  \citenamefont {Bancal}, \citenamefont {Sangouard},\ and\ \citenamefont
  {Sekatski}}]{Wagner2020}%
  \BibitemOpen
  \bibfield  {author} {\bibinfo {author} {\bibfnamefont {S.}~\bibnamefont
  {Wagner}}, \bibinfo {author} {\bibfnamefont {J.-D.}\ \bibnamefont {Bancal}},
  \bibinfo {author} {\bibfnamefont {N.}~\bibnamefont {Sangouard}},\ and\
  \bibinfo {author} {\bibfnamefont {P.}~\bibnamefont {Sekatski}},\ }\bibfield
  {title} {\bibinfo {title} {Device-independent characterization of quantum
  instruments},\ }\href {https://doi.org/10.22331/q-2020-03-19-243} {\bibfield
  {journal} {\bibinfo  {journal} {Quantum}\ }\textbf {\bibinfo {volume} {4}},\
  \bibinfo {pages} {243} (\bibinfo {year} {2020})}\BibitemShut {NoStop}%
\bibitem [{\citenamefont {Pusey}(2015)}]{Pusey2015}%
  \BibitemOpen
  \bibfield  {author} {\bibinfo {author} {\bibfnamefont {M.~F.}\ \bibnamefont
  {Pusey}},\ }\bibfield  {title} {\bibinfo {title} {Verifying the quantumness
  of a channel with an untrusted device},\ }\href
  {https://doi.org/10.1364/JOSAB.32.000A56} {\bibfield  {journal} {\bibinfo
  {journal} {J. Opt. Soc. Am. B}\ }\textbf {\bibinfo {volume} {32}},\ \bibinfo
  {pages} {A56} (\bibinfo {year} {2015})}\BibitemShut {NoStop}%
\bibitem [{\citenamefont {Dall'Arno}\ \emph {et~al.}(2017)\citenamefont
  {Dall'Arno}, \citenamefont {Brandsen},\ and\ \citenamefont
  {Buscemi}}]{DallArno2017}%
  \BibitemOpen
  \bibfield  {author} {\bibinfo {author} {\bibfnamefont {M.}~\bibnamefont
  {Dall'Arno}}, \bibinfo {author} {\bibfnamefont {S.}~\bibnamefont
  {Brandsen}},\ and\ \bibinfo {author} {\bibfnamefont {F.}~\bibnamefont
  {Buscemi}},\ }\bibfield  {title} {\bibinfo {title} {Device-independent tests
  of quantum channels},\ }\href {https://doi.org/10.1098/rspa.2016.0721}
  {\bibfield  {journal} {\bibinfo  {journal} {Proceedings of the Royal Society
  A: Mathematical, Physical and Engineering Sciences}\ }\textbf {\bibinfo
  {volume} {473}},\ \bibinfo {pages} {20160721} (\bibinfo {year}
  {2017})}\BibitemShut {NoStop}%
\bibitem [{\citenamefont {Rosset}\ \emph {et~al.}(2018)\citenamefont {Rosset},
  \citenamefont {Buscemi},\ and\ \citenamefont {Liang}}]{Rosset2018}%
  \BibitemOpen
  \bibfield  {author} {\bibinfo {author} {\bibfnamefont {D.}~\bibnamefont
  {Rosset}}, \bibinfo {author} {\bibfnamefont {F.}~\bibnamefont {Buscemi}},\
  and\ \bibinfo {author} {\bibfnamefont {Y.-C.}\ \bibnamefont {Liang}},\
  }\bibfield  {title} {\bibinfo {title} {Resource theory of quantum memories
  and their faithful verification with minimal assumptions},\ }\href
  {https://doi.org/10.1103/PhysRevX.8.021033} {\bibfield  {journal} {\bibinfo
  {journal} {Phys. Rev. X}\ }\textbf {\bibinfo {volume} {8}},\ \bibinfo {pages}
  {021033} (\bibinfo {year} {2018})}\BibitemShut {NoStop}%
\bibitem [{\citenamefont {Yu}\ \emph {et~al.}(2021)\citenamefont {Yu},
  \citenamefont {Sun}, \citenamefont {Zhang}, \citenamefont {Bai},
  \citenamefont {Fang}, \citenamefont {Luo}, \citenamefont {An}, \citenamefont
  {Li}, \citenamefont {Zhang}, \citenamefont {Xu}, \citenamefont {Bao},\ and\
  \citenamefont {Pan}}]{Yu2021}%
  \BibitemOpen
  \bibfield  {author} {\bibinfo {author} {\bibfnamefont {Y.}~\bibnamefont
  {Yu}}, \bibinfo {author} {\bibfnamefont {P.-F.}\ \bibnamefont {Sun}},
  \bibinfo {author} {\bibfnamefont {Y.-Z.}\ \bibnamefont {Zhang}}, \bibinfo
  {author} {\bibfnamefont {B.}~\bibnamefont {Bai}}, \bibinfo {author}
  {\bibfnamefont {Y.-Q.}\ \bibnamefont {Fang}}, \bibinfo {author}
  {\bibfnamefont {X.-Y.}\ \bibnamefont {Luo}}, \bibinfo {author} {\bibfnamefont
  {Z.-Y.}\ \bibnamefont {An}}, \bibinfo {author} {\bibfnamefont
  {J.}~\bibnamefont {Li}}, \bibinfo {author} {\bibfnamefont {J.}~\bibnamefont
  {Zhang}}, \bibinfo {author} {\bibfnamefont {F.}~\bibnamefont {Xu}}, \bibinfo
  {author} {\bibfnamefont {X.-H.}\ \bibnamefont {Bao}},\ and\ \bibinfo {author}
  {\bibfnamefont {J.-W.}\ \bibnamefont {Pan}},\ }\bibfield  {title} {\bibinfo
  {title} {Measurement-device-independent verification of a quantum memory},\
  }\href {https://doi.org/10.1103/PhysRevLett.127.160502} {\bibfield  {journal}
  {\bibinfo  {journal} {Phys. Rev. Lett.}\ }\textbf {\bibinfo {volume} {127}},\
  \bibinfo {pages} {160502} (\bibinfo {year} {2021})}\BibitemShut {NoStop}%
\bibitem [{Note1()}]{Note1}%
  \BibitemOpen
  \bibinfo {note} {One note that the definition here differs from the on in
  \cite {blocks}, where the quantity is not squared}\BibitemShut {NoStop}%
\bibitem [{\citenamefont {Bell}(1964)}]{Bell}%
  \BibitemOpen
  \bibfield  {author} {\bibinfo {author} {\bibfnamefont {J.~S.}\ \bibnamefont
  {Bell}},\ }\bibfield  {title} {\bibinfo {title} {On the {E}instein {P}odolsky
  {R}osen paradox},\ }\href
  {https://doi.org/10.1103/PhysicsPhysiqueFizika.1.195} {\bibfield  {journal}
  {\bibinfo  {journal} {Physics Physique Fizika}\ }\textbf {\bibinfo {volume}
  {1}},\ \bibinfo {pages} {195} (\bibinfo {year} {1964})}\BibitemShut {NoStop}%
\bibitem [{\citenamefont {Brunner}\ \emph {et~al.}(2014)\citenamefont
  {Brunner}, \citenamefont {Cavalcanti}, \citenamefont {Pironio}, \citenamefont
  {Scarani},\ and\ \citenamefont {Wehner}}]{review}%
  \BibitemOpen
  \bibfield  {author} {\bibinfo {author} {\bibfnamefont {N.}~\bibnamefont
  {Brunner}}, \bibinfo {author} {\bibfnamefont {D.}~\bibnamefont {Cavalcanti}},
  \bibinfo {author} {\bibfnamefont {S.}~\bibnamefont {Pironio}}, \bibinfo
  {author} {\bibfnamefont {V.}~\bibnamefont {Scarani}},\ and\ \bibinfo {author}
  {\bibfnamefont {S.}~\bibnamefont {Wehner}},\ }\bibfield  {title} {\bibinfo
  {title} {Bell nonlocality},\ }\href
  {https://doi.org/10.1103/RevModPhys.86.419} {\bibfield  {journal} {\bibinfo
  {journal} {Rev. Mod. Phys.}\ }\textbf {\bibinfo {volume} {86}},\ \bibinfo
  {pages} {419} (\bibinfo {year} {2014})}\BibitemShut {NoStop}%
\bibitem [{\citenamefont {Clauser}\ \emph {et~al.}(1969)\citenamefont
  {Clauser}, \citenamefont {Horne}, \citenamefont {Shimony},\ and\
  \citenamefont {Holt}}]{CHSH}%
  \BibitemOpen
  \bibfield  {author} {\bibinfo {author} {\bibfnamefont {J.~F.}\ \bibnamefont
  {Clauser}}, \bibinfo {author} {\bibfnamefont {M.~A.}\ \bibnamefont {Horne}},
  \bibinfo {author} {\bibfnamefont {A.}~\bibnamefont {Shimony}},\ and\ \bibinfo
  {author} {\bibfnamefont {R.~A.}\ \bibnamefont {Holt}},\ }\bibfield  {title}
  {\bibinfo {title} {Proposed experiment to test local hidden-variable
  theories},\ }\href {https://doi.org/10.1103/PhysRevLett.23.880} {\bibfield
  {journal} {\bibinfo  {journal} {Phys. Rev. Lett.}\ }\textbf {\bibinfo
  {volume} {23}},\ \bibinfo {pages} {880} (\bibinfo {year} {1969})}\BibitemShut
  {NoStop}%
\bibitem [{\citenamefont {Bock}\ \emph {et~al.}(2023)\citenamefont {Bock},
  \citenamefont {Sekatski}, \citenamefont {Bancal}, \citenamefont {Kucera},
  \citenamefont {Bauer}, \citenamefont {Sangouard}, \citenamefont {Becher},\
  and\ \citenamefont {Eschner}}]{MatthiasBock}%
  \BibitemOpen
  \bibfield  {author} {\bibinfo {author} {\bibfnamefont {M.}~\bibnamefont
  {Bock}}, \bibinfo {author} {\bibfnamefont {P.}~\bibnamefont {Sekatski}},
  \bibinfo {author} {\bibfnamefont {J.-D.}\ \bibnamefont {Bancal}}, \bibinfo
  {author} {\bibfnamefont {S.}~\bibnamefont {Kucera}}, \bibinfo {author}
  {\bibfnamefont {T.}~\bibnamefont {Bauer}}, \bibinfo {author} {\bibfnamefont
  {N.}~\bibnamefont {Sangouard}}, \bibinfo {author} {\bibfnamefont
  {C.}~\bibnamefont {Becher}},\ and\ \bibinfo {author} {\bibfnamefont
  {J.}~\bibnamefont {Eschner}},\ }\href@noop {} {\bibinfo {title}
  {{Calibration-Independent Certification of a Quantum Frequency Converter}}}
  (\bibinfo {year} {2023}),\ \Eprint {https://arxiv.org/abs/2304.09517}
  {arXiv:2304.09517 [quant-ph]} \BibitemShut {NoStop}%
\bibitem [{\citenamefont {Pearle}(1970)}]{loophole1}%
  \BibitemOpen
  \bibfield  {author} {\bibinfo {author} {\bibfnamefont {P.~M.}\ \bibnamefont
  {Pearle}},\ }\bibfield  {title} {\bibinfo {title} {Hidden-variable example
  based upon data rejection},\ }\href {https://doi.org/10.1103/PhysRevD.2.1418}
  {\bibfield  {journal} {\bibinfo  {journal} {Phys. Rev. D}\ }\textbf {\bibinfo
  {volume} {2}},\ \bibinfo {pages} {1418} (\bibinfo {year} {1970})}\BibitemShut
  {NoStop}%
\bibitem [{\citenamefont {Orsucci}\ \emph {et~al.}(2020)\citenamefont
  {Orsucci}, \citenamefont {Bancal}, \citenamefont {Sangouard},\ and\
  \citenamefont {Sekatski}}]{Orsucci2020howpostselection}%
  \BibitemOpen
  \bibfield  {author} {\bibinfo {author} {\bibfnamefont {D.}~\bibnamefont
  {Orsucci}}, \bibinfo {author} {\bibfnamefont {J.-D.}\ \bibnamefont {Bancal}},
  \bibinfo {author} {\bibfnamefont {N.}~\bibnamefont {Sangouard}},\ and\
  \bibinfo {author} {\bibfnamefont {P.}~\bibnamefont {Sekatski}},\ }\bibfield
  {title} {\bibinfo {title} {How post-selection affects device-independent
  claims under the fair sampling assumption},\ }\href
  {https://doi.org/10.22331/q-2020-03-02-238} {\bibfield  {journal} {\bibinfo
  {journal} {{Quantum}}\ }\textbf {\bibinfo {volume} {4}},\ \bibinfo {pages}
  {238} (\bibinfo {year} {2020})}\BibitemShut {NoStop}%
\bibitem [{\citenamefont {Tiranov}\ \emph {et~al.}(2015)\citenamefont
  {Tiranov}, \citenamefont {Lavoie}, \citenamefont {Ferrier}, \citenamefont
  {Goldner}, \citenamefont {Verma}, \citenamefont {Nam}, \citenamefont {Mirin},
  \citenamefont {Lita}, \citenamefont {Marsili}, \citenamefont {Herrmann} \emph
  {et~al.}}]{tiranov2015}%
  \BibitemOpen
  \bibfield  {author} {\bibinfo {author} {\bibfnamefont {A.}~\bibnamefont
  {Tiranov}}, \bibinfo {author} {\bibfnamefont {J.}~\bibnamefont {Lavoie}},
  \bibinfo {author} {\bibfnamefont {A.}~\bibnamefont {Ferrier}}, \bibinfo
  {author} {\bibfnamefont {P.}~\bibnamefont {Goldner}}, \bibinfo {author}
  {\bibfnamefont {V.~B.}\ \bibnamefont {Verma}}, \bibinfo {author}
  {\bibfnamefont {S.~W.}\ \bibnamefont {Nam}}, \bibinfo {author} {\bibfnamefont
  {R.~P.}\ \bibnamefont {Mirin}}, \bibinfo {author} {\bibfnamefont {A.~E.}\
  \bibnamefont {Lita}}, \bibinfo {author} {\bibfnamefont {F.}~\bibnamefont
  {Marsili}}, \bibinfo {author} {\bibfnamefont {H.}~\bibnamefont {Herrmann}},
  \emph {et~al.},\ }\bibfield  {title} {\bibinfo {title} {Storage of
  hyperentanglement in a solid-state quantum memory},\ }\href@noop {}
  {\bibfield  {journal} {\bibinfo  {journal} {Optica}\ }\textbf {\bibinfo
  {volume} {2}},\ \bibinfo {pages} {279} (\bibinfo {year} {2015})}\BibitemShut
  {NoStop}%
\bibitem [{Note2()}]{Note2}%
  \BibitemOpen
  \bibinfo {note} {For the polarisation degree of freedom, slightly lower CHSH
  values have been observed}\BibitemShut {NoStop}%
\bibitem [{\citenamefont {Neves}\ \emph {et~al.}(2023)\citenamefont {Neves},
  \citenamefont {dos Santos~Martins}, \citenamefont {Yacoub}, \citenamefont
  {Lefebvre}, \citenamefont {Supic}, \citenamefont {Markham},\ and\
  \citenamefont {Diamanti}}]{Neves2023}%
  \BibitemOpen
  \bibfield  {author} {\bibinfo {author} {\bibfnamefont {S.}~\bibnamefont
  {Neves}}, \bibinfo {author} {\bibfnamefont {L.}~\bibnamefont {dos
  Santos~Martins}}, \bibinfo {author} {\bibfnamefont {V.}~\bibnamefont
  {Yacoub}}, \bibinfo {author} {\bibfnamefont {P.}~\bibnamefont {Lefebvre}},
  \bibinfo {author} {\bibfnamefont {I.}~\bibnamefont {Supic}}, \bibinfo
  {author} {\bibfnamefont {D.}~\bibnamefont {Markham}},\ and\ \bibinfo {author}
  {\bibfnamefont {E.}~\bibnamefont {Diamanti}},\ }\href@noop {} {\bibinfo
  {title} {Experimental certification of quantum transmission via bell's
  theorem}} (\bibinfo {year} {2023}),\ \Eprint
  {https://arxiv.org/abs/2304.09605} {arXiv:2304.09605 [quant-ph]} \BibitemShut
  {NoStop}%
\bibitem [{\citenamefont {Verstraete}\ and\ \citenamefont
  {Verschelde}(2002)}]{verstraete2002quantum}%
  \BibitemOpen
  \bibfield  {author} {\bibinfo {author} {\bibfnamefont {F.}~\bibnamefont
  {Verstraete}}\ and\ \bibinfo {author} {\bibfnamefont {H.}~\bibnamefont
  {Verschelde}},\ }\bibfield  {title} {\bibinfo {title} {On quantum channels},\
  }\href@noop {} {\bibfield  {journal} {\bibinfo  {journal} {arXiv preprint
  quant-ph/0202124}\ } (\bibinfo {year} {2002})}\BibitemShut {NoStop}%
\bibitem [{\citenamefont {Valcarce}\ \emph {et~al.}(2020)\citenamefont
  {Valcarce}, \citenamefont {Sekatski}, \citenamefont {Orsucci}, \citenamefont
  {Oudot}, \citenamefont {Bancal},\ and\ \citenamefont
  {Sangouard}}]{valcarce2020minimum}%
  \BibitemOpen
  \bibfield  {author} {\bibinfo {author} {\bibfnamefont {X.}~\bibnamefont
  {Valcarce}}, \bibinfo {author} {\bibfnamefont {P.}~\bibnamefont {Sekatski}},
  \bibinfo {author} {\bibfnamefont {D.}~\bibnamefont {Orsucci}}, \bibinfo
  {author} {\bibfnamefont {E.}~\bibnamefont {Oudot}}, \bibinfo {author}
  {\bibfnamefont {J.-D.}\ \bibnamefont {Bancal}},\ and\ \bibinfo {author}
  {\bibfnamefont {N.}~\bibnamefont {Sangouard}},\ }\bibfield  {title} {\bibinfo
  {title} {What is the minimum chsh score certifying that a state resembles the
  singlet?},\ }\href@noop {} {\bibfield  {journal} {\bibinfo  {journal}
  {Quantum}\ }\textbf {\bibinfo {volume} {4}},\ \bibinfo {pages} {246}
  (\bibinfo {year} {2020})}\BibitemShut {NoStop}%
\bibitem [{\citenamefont {Marshall}\ \emph {et~al.}(1979)\citenamefont
  {Marshall}, \citenamefont {Olkin},\ and\ \citenamefont
  {Arnold}}]{marshall1979inequalities}%
  \BibitemOpen
  \bibfield  {author} {\bibinfo {author} {\bibfnamefont {A.~W.}\ \bibnamefont
  {Marshall}}, \bibinfo {author} {\bibfnamefont {I.}~\bibnamefont {Olkin}},\
  and\ \bibinfo {author} {\bibfnamefont {B.~C.}\ \bibnamefont {Arnold}},\
  }\href@noop {} {\emph {\bibinfo {title} {Inequalities: theory of majorization
  and its applications}}}\ (\bibinfo  {publisher} {Springer},\ \bibinfo {year}
  {1979})\BibitemShut {NoStop}%
\end{thebibliography}%

\appendix
\section{Background}

\label{app: assumptions}

In this section we formally introduce the results that will be used to derived the bounds presented in the main text.

\subsection{Fair sampling assumptions}

Here we describe general properties of the systems which satisfy one of the fair-sampling assumptions mentioned in the main text. These assumptions concern the form of the measurements $\cM_A, \cM_B^{(i)}$ and $\cM_B^{(o)}$, and allow one to describe the physics of the experiment in terms of the post-selected statistics
\be
\bar {\text{P}}^{(i/o)}(a,b|x,y) = \frac{\text{P}^{(i/o)}(a,b|x,y)}{\sum_{a,b=0,1}\text{P}^{(i/o)}(a,b|x,y)},
\ee
where no-click events ``$\nc$'' are discarded from the measurement data, as we now explain.

The \textit{strong fair sampling assumptions} (SFS) says that the occurrence of the no-click events is independent of the measured state. In this case the post-selection is ``for free'', as the  post-selected statistics $\bar {\text{P}}^{(i/o)}(a,b|x,y)$ correspond to the model
\be
\{ \rho_{AB}^{(i)}, \rho_{AB}^{(o)}, \bar{\cM}_A, \bar{\cM}_B^{(i)}, \bar{\cM}_B^{(o)}\},
\ee
where the measurements have been replaced with their unit-efficiency version $\bar{\cM}_A, \bar{\cM}_B^{(i)}, \bar{\cM}_B^{(o)}$ -- that always give a click ($\bar A_{\nc|x}= \bar B_{\nc|y}^{(i/o)}=0$). It is easy to see that the SFS assumption guarantees the existence of such measurements.

The \textit{weak fair sampling assumption} (WFS) is satisfied when the occurrence of the no-click events is independent of the measurement setting. In this case each measurement can be decomposed as a filter $R$, followed by a unit-efficiency measurement. The post-selected correlations can then be attributed to the model
\be\label{eq: model WFS}
\{ \bar{\rho}_{AB}^{(i)}, \bar{\rho}_{AB}^{(o)}, \bar{\cM}_A, \bar{\cM}_B^{(i)}, \bar{\cM}_B^{(o)}\},
\ee
where the state are filtered from $\rho_{AB}^{(i/o)}$ as
\be\label{eq: filtered state 1}
\bar{\rho}_{AB}^{(i/o)} =\frac{(R_A\otimes R_B^{(i/o)})[\rho_{AB}^{(i/o)}]}{\text{tr} (R_A\otimes R_B^{(i/o)})[\rho_{AB}^{(i/o)}]},
\ee
with some CPTN maps $R_A$ and $R_B^{(i/o)}$. Furthermore, the probability that these filters are successful can be estimated from the experimental data:
\be\label{eq app: prob 1}\begin{split}
\text{tr} (R_A\otimes R_B^{(i/o)})[\rho_{AB}^{(i/o)}] &= P^{(i/0)}(a\neq \nc,b\neq \nc|x,y)
\\
\text{tr} (R_A\otimes 
\id)[\rho_{AB}^{(i/o)}]&= P^{(i/0)}(a\neq \nc|x)
\end{split}
\ee
Note here that in general the probabilities $P^{(i/0)}(a\neq \nc,b\neq \nc|x,y)$, $P^{(i/0)}(a\neq \nc|x)$ and $P^{(i/0)}(b\neq \nc|y)$ may depend on the settings $x,y$. This is not in contradiction with the most general form of the WFS assumptions, where the filter maps for different setting are not identical but proportional~\cite{Orsucci2020howpostselection}. In this case, one would simply use Eq.~\eqref{eq app: prob 1} for any choice of $x$ and $y$.

Under the WFS assumption the post-selected CHSH scores $\bar S_{i/o}$ in Eq.~\eqref{eq: CHSH scores} can thus be associated to the filtered states $\bar{\rho}_{AB}^{(i/o)}$ and the measurements $\bar{\cM}_A$ with  $\bar{\cM}_B^{(i/o)}$. Furthermore, the probabilities to pass the filters are equal to the probabilities to get conclusive measurement outcomes, and can be estimated from the experimental data. 

Naturally, one can also consider a combination of the two types of fair-sampling assumptions. E.g., if Alice's  measurement is WFS and Bob's are SFS, the states $\bar{\rho}_{AB}^{(i/o)}$ are only filtered on Alice's side with the map CPTN map $R_A$. In the following, we consider the most general case where all measurements are WFS. In this case, one focuses on the states $\bar \rho_{AB}^{(i/o)}$ and measurements $\bar \cM_A, \bar \cM_{B}^{(i/o)}$ of Eqs.~(\ref{eq: model WFS},\ref{eq: filtered state 1}) that give rise to the CHSH scores $\bar{S}_{i/o}$. This situation includes both the case where some measurements are SFS and where no post-selection occurs: these special cases are recovered by setting $R_B^{(i/o)}=\text{id}$ and/or $R_A=\text{id}$ (in this case $\bar S_{i/o}=S_{i/o}$).

\subsection{CHSH as a self-test}

Let us recall that CHSH is a self-test for the maximally two-qubit state. Precisely, Kaniewski~\cite{jed} showed that if a state $\bar{\rho}^{(i/o)}_{AB}$ and measurements $\bar{\cM}_A$ and $\bar{\cM}_B$ achieve a CHSH score of $\bar S_{i/o}$, then there exist local CPTP maps $\Lambda_A$ and $\Lambda_B^{(i/o)}$ such that
\be\label{eq: JED}\begin{split}
&\bra{\Phi^+}(\Lambda_A\otimes\Lambda_B^{(i/o)})[\bar{\rho}^{(i/o)}_{AB}] \ket{\Phi^+} \geq f(\bar S_{i/o}) \\
&\, \text{with}\qquad f(S)=\frac{1}{2}\left(1+ \frac{S-s_*}{2\sqrt{2}-s_*}\right)
\end{split}
\ee   
and $s_* = \frac{16+14\sqrt{2}}{17}\leq 2.11$. Furthermore the form of the extraction map $\Lambda_A (\Lambda_B)$ only depend on the local measurements $\bar{\cM}_A (\bar{\cM}_B)$.

\subsection{Self-testing under fair-sampling assumption}

Let us now discuss in greater detail what the post-selected CHSH scores tell us about the underlying states.

Alice's measurement $\cM_A$ used for both CHSH tests (with or without memory) is the same. Hence, the filter $R_A$ in Eq.~\eqref{eq: filtered state 1} and the extraction map $\Lambda_A$ in Eq.~\eqref{eq: JED} are also common to both tests. This allows to define a state which absorbs these maps
\be
\varrho_{AB}= \frac{(\Lambda_A\!\circ\! R_A\otimes \t{id})[\rho_{AB}^{(i)}]}{\t{tr} (R_A\otimes \t{id})[\rho_{AB}^{(i)}]}
\in B(\mathds{C}^2_A\otimes \cH^{(i)}_B),
\ee
where the system $A$ is a qubit, and use it to rewrite the fidelities of Eq.~\eqref{eq: JED} as
\begin{align}\label{eq: Fi}
F_i &= \underbrace{\bra{\Phi^+} \frac{(\t{id}\otimes \bar{\Lambda}_B^{(i)})[\varrho_{AB}]}{\tr (\t{id}\otimes \bar{\Lambda}_B^{(i)})[\varrho_{AB}]}\ket{\Phi^+}}_{=\bra{\Phi^+}(\Lambda_A\otimes\Lambda_B^{(i)})[\bar{\rho}^{(i)}_{AB}] \ket{\Phi^+}}  
\\  \label{eq: Fo}
F_o &= \underbrace{\bra{\Phi^+}\frac{(\t{id}\otimes \bar{\Lambda}_B^{(o)}\circ \widetilde{\t{QM}})[\varrho_{AB}]}{\tr (\t{id}\otimes \bar{\Lambda}_B^{(o)}\circ \widetilde{\t{QM}})[\varrho_{AB}]} \ket{\Phi^+} }_{=\bra{\Phi^+}(\Lambda_A\otimes\Lambda_B^{(o)})[\bar{\rho}^{(o)}_{AB}] \ket{\Phi^+}}
\end{align}
where we defined the CPTN maps $\bar{\Lambda}_B^{(i/0)} \equiv \Lambda_B^{(i/o)} \circ R_B^{(i/o)} :B(\cH_B^{(i/o)})\mapsto B(\mathds{C}^2)$ combining the filtering with the extraction of the qubit. Here and in the remaining of the text, we use capital letters ($V$) to denote CPTP maps and capital letters with a bar ($\bar V$) to denote CPTN maps. One also has access to the corresponding filtering probabilities
\begin{align} \label{eq: Pi}
P_i &= \tr (\t{id}\otimes \bar{\Lambda}_B^{(i)})[\varrho_{AB}] \\
\label{eq: Po}
 P_o &= \tr ( \t{id}\otimes \bar{\Lambda}_B^{(o)}\circ \widetilde{\t{QM}})[\varrho_{AB}] 
\end{align}
that are estimated from the experimental data. In the absence of no-detection events, $P_i=P_o=1$. Assuming strong fair sampling also amounts to set $P_i=1$ and $P_o=1$, as a consequence of setting $R_B^{(i/o)}=\text{id}$, which can be done independently for the measurements $\cM_B^{(i)}$ and $\cM_B^{(o)}$. Let us summarize this with the following observation.\\

\textbf{Observation} \textit{In all the scenarios discussed in the paper we can identify a state $\varrho_{AB}$ on $B(\mathds{C}^2_A\otimes \cH^{(i)}_B)$ and two maps $\bar \Lambda_B^{(i)}$ and $\bar \Lambda_B^{(o)}$, such that the bounds 
\be\label{eq:boundsFfPp}
    F_i \geq f_i\ ,\ \ F_o \geq f_o\ ,\ \ P_i \geq p_i\ ,\ \ P_o \geq p_o
\ee
hold for Eqs.~(\ref{eq: Fi}--\ref{eq: Po}) for some values of $f_{i}, f_{o}, p_{i}$ and $p_o$. In particular, we can always choose
\begin{align}
    p_i &= \t{P}^{(i)}(b\neq \nc| a\neq \nc)\\
    p_o &= \t{P}^{(o)}(b\neq \nc| a\neq \nc)
\end{align}
and Eq.~\eqref{eq: JED} provides the valid bounds
\begin{align}
    f_i &= f(\bar S_i)\\
    f_o &= f(\bar S_o).
    \label{eq app: obs final}
\end{align}}\\

This will be the starting point for the proof of the results.

\section{Proof of the main results.}

\label{app: proofs}
In this section we give the proofs of the main results.

\subsection{Preliminary Lemmas about $\varrho_{AB}$}

From the discussion above, one sees that the quantities $F_{i/o}\geq f_{i/o}$ and $P_{i/o}\geq p_{i/o}$ can be interpreted as statements about the state $\varrho_{AB}$ and the memory channel $\widetilde{\t{QM}}$. Yet there is still one obstacle that we have to overcome before combining Eqs.~(\ref{eq: Fi}-\ref{eq: Po}) in the final certification of the memory. The later requires to know how $\widetilde{\t{QM}}$ acts on the state $\Phi^+$, while the experimental data only tells us about  its action on the unknown state $\varrho_{AB}$. Our first goal is thus to show that $\varrho_{AB}$ can be prepared from $\Phi^+$ with the help of some injection map $V$ applied by Bob. Statements of this type are provided by the following Lemmas \ref{lemma0}-\ref{lemma2}. For Lemmas \ref{lemma0} and \ref{lemma2} we consider a different task of preparing $\varrho_{AB}$ from a non-maximally entangled two-qubit state. Its usefulness for our general goal will become clear in the next section.

 \begin{lemma} \label{lemma0} For any state $\varrho_{AB}\in B(\mathds{C}_A^2\otimes \cH_B^{(i)})$ there exists a qubit unitary $u$, a CPTP map $\Xi : B(\mathds{C}^2) \to B(\cH_B^{(i)})$ and a two qubit state 
\be\label{eq: Phi_p}
\ket{\Phi_\lambda}\equiv \sqrt{\lambda}\ket{0 0}+\sqrt{1-\lambda}\ket{11}
\ee
with some  $\frac{1}{2}\leq \lambda \leq 1$, such that
\be
\varrho_{AB} =(u\otimes \Xi)[\Phi_\lambda].
\ee
 \end{lemma}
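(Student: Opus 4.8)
The plan is to reduce the claim to the standard fact that any two purifications of the same state are related by an isometry on the purifying system. First I would note that the local unitary $u$ and the parameter $\lambda$ are forced by the $A$-marginal alone. I diagonalise $\varrho_A = \ptr{B}{\varrho_{AB}} = \lambda\prjct{e_0} + (1-\lambda)\prjct{e_1}$, ordering the eigenvalues so that $\tfrac12\le\lambda\le1$, and set $u$ to be the qubit unitary with $u\ket{i}=\ket{e_i}$. Pulling $u$ out of the statement, it remains to prepare $\tilde\varrho_{AB}=(u^\dagger\otimes\t{id})[\varrho_{AB}]$ from $\Phi_\lambda$ using only a channel on $B$; by construction $\tilde\varrho_{AB}$ has $A$-marginal $\t{diag}(\lambda,1-\lambda)$, which is exactly the $A$-marginal of $\ket{\Phi_\lambda}$.

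Next I would purify $\tilde\varrho_{AB}$ to a pure state $\ket{\Psi}_{ABE}\in\cH_A\otimes\cH_B^{(i)}\otimes\cH_E$ with a sufficiently large environment $E$. Both $\ket{\Psi}$ and $\ket{\Phi_\lambda}$ are then purifications of the same qubit state $\t{diag}(\lambda,1-\lambda)$ on $A$. Since the purifying system of $\ket{\Phi_\lambda}$ is a single qubit, hence no larger than $\cH_B^{(i)}\otimes\cH_E$, the equivalence of purifications yields an isometry $W:\qubit\to\cH_B^{(i)}\otimes\cH_E$ with $\ket{\Psi}=(\id_A\otimes W)\ket{\Phi_\lambda}$. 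I then define the map $\Xi[\sigma]=\ptr{E}{W\sigma W^\dagger}$, which is CPTP as an isometry followed by a partial trace. Tracing out $E$ gives $(\t{id}\otimes\Xi)[\Phi_\lambda]=\ptr{E}{\prjct{\Psi}}=\tilde\varrho_{AB}$, and reinstating $u$ yields $(u\otimes\Xi)[\Phi_\lambda]=\varrho_{AB}$, as required.

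The step I would watch most carefully is the direction of the purification equivalence: the isometry must map the small qubit purifying system of $\ket{\Phi_\lambda}$ into the larger system $\cH_B^{(i)}\otimes\cH_E$ of $\ket{\Psi}$, which is always possible once $\dim\cH_E$ is chosen large enough. The remaining points concern the boundary values of $\lambda$ and are minor: at $\lambda=\tfrac12$ the eigenbasis of $\varrho_A$ is not unique, but any orthonormal choice works; at $\lambda=1$ the marginal $\varrho_A$ is pure, forcing $\varrho_{AB}=\prjct{e_0}\otimes\varrho_B$ to be a product state (with $\varrho_B=\ptr{A}{\varrho_{AB}}$) that is reproduced directly by the preparation channel $\Xi:\prjct{0}\mapsto\varrho_B$. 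Everything else is a routine verification.
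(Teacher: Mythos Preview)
Your proof is correct and follows essentially the same route as the paper's. The only cosmetic difference is that the paper purifies first and then invokes the Schmidt decomposition across $A|BB'$ to read off $u$, $\lambda$, and the isometry in one step, whereas you diagonalise $\varrho_A$ first to fix $u$ and $\lambda$ and then appeal to the equivalence of purifications to obtain the isometry $W$; these are two phrasings of the same argument.
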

 \begin{proof}
 The lemma is a simple consequence of the Schmidt decomposition. We start by purifying $\varrho_{AB}$. To do so introduce a system $B'$. There exist a pure state $\ket{\Psi}_{ABB'}$ which is a purification of $\varrho_{AB}$, that is  
\be
\varrho_{AB}= \tr_{B'} \ketbra{\Psi}_{ABB'}.
\ee
Next, put the states $\ket{\Psi}_{ABB'}$ in the Schmidt form with respect to the bipartition $A|BB'$
\be
\ket{\Psi}_{ABB'}= \sqrt{\lambda}\ket{a_0}_A\ket{\xi_0}_{BB'}+ \sqrt{1-\lambda}\ket{a_1}_A\ket{\xi_1}_{BB'},
\ee
where both pairs of state are orthogonal $\braket{a_0}{a_1}=\braket{\xi_0}{\xi_1}=0$, and we assume $\lambda \geq 1/2$ without loss of generality. The $\Psi_{ABB'}$ can be prepared form $\Phi_\lambda$ in Eq.~\eqref{eq: Phi_p} 
\be
\ket{\Psi}_{ABB'} = u\otimes U \ket{\Phi_\lambda}
\ee
by applying a unitary $u=\ketbra{a_0}{0} + \ketbra{a_1}{1}$ on system A and an isometry $U =\ketbra{\xi_0}{0}+\ketbra{\xi_1}{1}$ on system B.
Finally by defining the CPTP map
\be
\begin{split}
\Xi : B(\mathds{C}^2) &\to B(\cH_B^{(i)}) \\
\rho &\mapsto \tr_{B'} U[\rho]
\end{split}
\ee
we obtain
\be
\varrho_{AB} =(u\otimes \Xi)[\Phi_\lambda]
\ee
with $\frac{1}{2}\leq  \lambda \leq 1$. Here and below we slightly abuse the notation writing $u[\rho]= u \, \rho \, u^\dag$.
\end{proof}

Now consider the situation where $\varrho_{AB}$  is to be prepared from $\Phi^+$ with a non-deterministic injection map $\bar V$, that corresponds to Scenario 2.

 \begin{lemma} \label{lemma1} Any state $\varrho_{AB}\in B(\mathds{C}_A^2\otimes \cH_B^{(i)})$ can be prepared from $\Phi^+$ by a CPTN map $\bar V$ applied by Bob that has a success probability at least $\frac{1}{2}$, and a unitary $u$ applied by Alice. Precisely, for all $\varrho_{AB}$, there exist a CPTN map $\bar V$ and a unitary $u$ such that:
 \be
 \begin{split}
     &\varrho_{AB} = \frac{(u \otimes \bar V)[\Phi^+]}{\tr (u \otimes \bar V)[\Phi^+]} \\
     &\tr (u\otimes \bar V)[\Phi^+] \geq \frac{1}{2}.
 \end{split}
 \ee
 \end{lemma}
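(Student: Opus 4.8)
The plan is to reduce the statement to Lemma~\ref{lemma0} by inserting a single local filtering step on Bob's side that converts $\Phi^+$ into the non-maximally entangled state $\Phi_\lambda$. First I would invoke Lemma~\ref{lemma0} to obtain a qubit unitary $u$, a CPTP map $\Xi$, and a weight $\frac{1}{2}\leq\lambda\leq 1$ with $\varrho_{AB}=(u\otimes\Xi)[\Phi_\lambda]$. It then suffices to prepare $\Phi_\lambda$ from $\Phi^+$ by a probabilistic map on $B$ and to absorb $\Xi$ afterwards, since $\Xi$ and $u$ are trace preserving and will not alter the success probability.

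For the filtering step I would use the Kraus operator $K=\ketbra{0}{0}+\sqrt{\tfrac{1-\lambda}{\lambda}}\,\ketbra{1}{1}$ acting on Bob. Because $\lambda\geq\tfrac{1}{2}$ we have $\tfrac{1-\lambda}{\lambda}\leq 1$, so $K^\dagger K\leq\id$ and $\mathcal{K}[\rho]=K\rho K^\dagger$ is a legitimate CPTN map. Using $(\id\otimes K)\ket{\Phi^+}=\tfrac{1}{\sqrt{2}}\big(\ket{00}+\sqrt{\tfrac{1-\lambda}{\lambda}}\ket{11}\big)$, a direct computation shows that this vector has squared norm $\tfrac{1}{2\lambda}$ and that its normalization is exactly $\ket{\Phi_\lambda}$. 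Hence the filter maps $\Phi^+$ onto $\Phi_\lambda$ with success probability precisely $\tfrac{1}{2\lambda}$.

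I would then set the injection map to be the composition $\bar V=\Xi\circ\mathcal{K}$, which is CPTN as a composition of a CPTN and a CPTP map, and take $u$ to be the unitary supplied by Lemma~\ref{lemma0}. Since $u$ and $\Xi$ preserve the trace, pushing them through the state leaves the normalization untouched, giving $\tr(u\otimes\bar V)[\Phi^+]=\tr(\id\otimes\mathcal{K})[\Phi^+]=\tfrac{1}{2\lambda}$, while the normalized output is $(u\otimes\Xi)[\Phi_\lambda]=\varrho_{AB}$ as required. Finally the constraint $\lambda\leq 1$ immediately yields $\tfrac{1}{2\lambda}\geq\tfrac{1}{2}$, establishing the claimed lower bound on the success probability.

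There is no serious obstacle once Lemma~\ref{lemma0} is in hand; the only point requiring care is that the filter must attenuate the \emph{smaller} Schmidt component while leaving the larger one untouched. This is precisely why the convention $\lambda\geq\tfrac{1}{2}$ matters: it guarantees that $K$ is a contraction (so $\mathcal{K}$ is physically admissible) and pins the worst case to $\lambda=1$, a product state, where the success probability saturates its lower bound $\tfrac{1}{2}$.
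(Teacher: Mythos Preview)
Your proof is correct and follows essentially the same route as the paper: invoke Lemma~\ref{lemma0}, filter $\Phi^+$ down to $\Phi_\lambda$ on Bob's side with the diagonal Kraus operator $K=\ketbra{0}{0}+\sqrt{\tfrac{1-\lambda}{\lambda}}\ketbra{1}{1}$, and compose with $\Xi$. The only cosmetic difference is that the paper additionally absorbs $u^T$ into $\bar V$ via the identity $(u\otimes\id)\ket{\Phi^+}=(\id\otimes u^T)\ket{\Phi^+}$, whereas you keep $u$ on Alice's side as the statement already permits; both choices give the same success probability $\tfrac{1}{2\lambda}\geq\tfrac{1}{2}$.
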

\begin{proof}
By Lemma~\ref{lemma0} the state $\varrho_{AB}$ can be expressed as
\be
\varrho_{AB} =(u\otimes \Xi)[\Phi_\lambda]
\ee
where $\Xi:B(\mathds{C}_A^2) \to B(\cH_B^{(i)})$ is a CPTP map and $u$ is a single-qubit unitary, and $\Phi_\lambda$ is a pure two-qubit state defined in Eq.~\eqref{eq: Phi_p}. In addition, any such $\Phi_\lambda$ can be obtained from $\Phi^+$ with a local filter $K_\lambda$
\be\begin{split}
    \Phi_\lambda = \frac{(\t{id}\otimes K_\lambda)[\Phi^+]}{\tr(\t{id}\otimes K_\lambda)[\Phi^+]} =\frac{(K_\lambda\otimes \t{id})[\Phi^+]}{\tr(K_\lambda\otimes \t{id})[\Phi^+]}\\
K_\lambda : \rho\mapsto
\left(\begin{array}{cc}
1 & \\
& \sqrt{\frac{1-\lambda}{\lambda}}
\end{array} \right)\rho \left(\begin{array}{cc}
1 & \\
& \sqrt{\frac{1-\lambda}{\lambda}}
\end{array} \right).
\end{split}
\ee
Here, the filtering probability $Q(p)=\tr(\t{id}\otimes K_\lambda)[\Phi^+] =\tr( K_\lambda\otimes \t{id})[\Phi^+]=\frac{1}{2}(1+ \frac{1-\lambda}{\lambda})$ takes values between $\frac{1}{2}\leq Q(\lambda) \leq 1$.

Hence, we obtain the following relations
\be
\varrho_{AB}= \frac{(u\otimes \Xi \circ K_\lambda)[\Phi^+]}{\t{tr} (u\otimes \Xi \circ K_\lambda)[\Phi^+]} =  \frac{(u\circ K_\lambda\otimes \Xi )[\Phi^+]}{\t{tr}(u\circ K_\lambda \otimes \Xi )[\Phi^+]},
\ee
specifying how the state $\rho_{AB}$ can be probabilistically prepared form $\Phi^+$ with local operations. In particular, let us define the injection map (CPTN) 
\be
\bar V = \Xi \circ K_\lambda  \circ u^T : B(\mathds{C}^2)\to H(\cH^{(i)}_B).
\ee
Using $(u\otimes\t{id})[\Phi^+]=(\t{id}\otimes u^T)[\Phi^+]$ we have 
\be
\frac{(\t{id}\otimes \bar V)[\Phi^+]}{\t{tr} (\t{id}\otimes \bar V)[\Phi^+]} = \frac{(u\otimes \Xi \circ K_\lambda )[\Phi^+]}{\t{tr} (u\otimes \Xi \circ K_\lambda )[\Phi^+]}=\varrho_{AB} .
\ee
In addition, the success probability satisfies
\be
\t{tr} (u \otimes \bar V)[\Phi^+]=  \t{tr} (\t{id}\otimes \bar V)[\Phi^+] = Q(\lambda)\geq \frac{1}{2},
\ee
which concludes the proof.
\end{proof}

It is not always desirable to have a probabilistic injection map, in particular scenarios 1 and 3 are stated for a deterministic one.
For this reason we now consider how to prepare $\varrho_{AB}$ from a state $\Phi_\lambda$ with the help of a deterministic injection map $V$. In contrast to Lemma~\ref{lemma0} we now also use the additional information 
provided by the bounds on $F_i$ and $P_i$ in order restrict the possible values of $\lambda$.

 \begin{lemma} \label{lemma2} Any state $\varrho_{AB}\in B(\mathds{C}_A^2\otimes \cH_B^{(i)})$ satisfying\be\begin{split}\label{eq:40}
\tr (\t{id}\otimes \bar{\Lambda}_B^{(i)} )[\varrho_{AB}] &= P_i \\
F\left(\frac{1}{P}_i (\t{id}\otimes \bar{\Lambda}_B^{(i)})[\varrho_{AB}], \Phi^+ \right)  &= F_i 
\end{split}
\ee 
for some CPTN map $\bar{\Lambda}_B^{(i)} :B(\cH_B^{(i)})\mapsto B(\mathds{C}^2)$ can be prepared with a CPTP map $V$ applied by Bob and a unitary $u$ applied by Alice as
 \be\label{app eq: LB p condition}
 \varrho_{AB} = (u \otimes V)[\Phi_\lambda]
 \ee
 \textit{for a two qubit state $\ket{\Phi_\lambda} =\sqrt{\lambda}\ket{00}+\sqrt{1-\lambda}\ket{11}$ with }
\be\label{app eq: LB p}
\frac{1}{2} \leq  \lambda \leq \lambda_i\equiv \begin{cases} 
1-\left(\frac{1}{2} -\sqrt{F_i(1-F_i)}\right) P_i & F_i\geq \frac{1}{2} \\
1 & F_i < \frac{1}{2}
\end{cases}.
\ee
In addition, this bound is tight, in the sense that there exists a pure two qubit state $\varrho_{AB}$ and a channel $\bar{\Lambda}_B^{(i)}$ yielding the values $P_i$ and $F_i$ in Eq.~\eqref{eq:40}, such that Eq.~\eqref{app eq: LB p condition} is impossible to satisfy for any $\lambda>\lambda_i$ in Eq.~\eqref{app eq: LB p}.
\end{lemma}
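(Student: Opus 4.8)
The existence of the representation is immediate from Lemma~\ref{lemma0}: writing $\varrho_{AB}=(u\otimes\Xi)[\Phi_\lambda]$ and setting $V=\Xi$ (a CPTP map) already yields Eq.~\eqref{app eq: LB p condition} with $\tfrac12\le\lambda\le1$. Moreover, since $\Xi$ is trace preserving, $\lambda$ (resp. $1-\lambda$) is nothing but the larger (resp. smaller) eigenvalue of the reduced state $\varrho_A=\tr_B\varrho_{AB}$. The entire content of the lemma is therefore the upper bound $\lambda\le\lambda_i$, i.e. a constraint forcing the $A$-marginal of $\varrho_{AB}$ to be sufficiently mixed once the data $F_i,P_i$ of Eq.~\eqref{eq:40} are fixed. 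I will focus on that bound, treating the easy case $F_i<\tfrac12$ separately, where $\lambda_i=1$ makes the statement trivial.

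The plan is to control $\lambda$ through the $A$-marginal of the filtered operator $\sigma_{AB}:=(\t{id}\otimes\bar\Lambda_B^{(i)})[\varrho_{AB}]$, a subnormalised two-qubit operator with $\tr\sigma_{AB}=P_i$ and $\bra{\Phi^+}\sigma_{AB}\ket{\Phi^+}=P_iF_i$, and $\sigma_A:=\tr_B\sigma_{AB}$. The argument rests on three marginal statements. First, writing the Kraus operators of $\bar\Lambda_B^{(i)}$ as $\{K_m\}$ with $E=\sum_m K_m^\dagger K_m\le\id$, one has $\sigma_A=\tr_B[(\id\otimes E)\varrho_{AB}]\le\tr_B\varrho_{AB}=\varrho_A$ in the positive-semidefinite order, because $\id\otimes(\id-E)\ge0$. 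Second, for any two-qubit state $\tau$ I claim the Bell-fidelity bound $\bra{\Phi^+}\tau\ket{\Phi^+}\le\tfrac12+\sqrt{\det\tau_A}$: decomposing $\tau$ into pure states and using that the overlap of a pure state of Schmidt parameter $\mu$ with \emph{any} maximally entangled state is at most $\tfrac12+\sqrt{\mu(1-\mu)}=\tfrac12+\sqrt{\det(\cdot)_A}$, the statement follows from Jensen's inequality and the concavity of $\tau_A\mapsto\sqrt{\det\tau_A}$ on $2\times2$ positive matrices. Applied to $\tau=\sigma_{AB}/P_i$ this gives $\det\sigma_A\ge P_i^2(F_i-\tfrac12)^2$. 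Third, among all $\sigma_A$ with $0\le\sigma_A\le\varrho_A$ and $\tr\sigma_A=P_i$, the determinant is maximised by a matrix diagonal in the eigenbasis of $\varrho_A$ (pinching in that basis preserves both constraints and cannot decrease the determinant), which yields $\det\sigma_A\le(1-\lambda)\big(P_i-(1-\lambda)\big)$ in the relevant regime $\lambda\ge1-P_i/2$.

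Combining the second and third points produces, with $q:=1-\lambda$, the quadratic inequality
\be
q^2-P_i\,q+P_i^2\big(F_i-\tfrac12\big)^2\le 0 .
\ee
Using $\sqrt{1-4(F_i-\tfrac12)^2}=2\sqrt{F_i(1-F_i)}$, its smaller root is $q_-=P_i\big(\tfrac12-\sqrt{F_i(1-F_i)}\big)$, hence $1-\lambda\ge P_i\big(\tfrac12-\sqrt{F_i(1-F_i)}\big)$, i.e. $\lambda\le\lambda_i$ as in Eq.~\eqref{app eq: LB p}. The complementary regime $\lambda<1-P_i/2$ is immediate, since there $\lambda<1-P_i/2\le\lambda_i$. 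The step I expect to be the crux is the fidelity--determinant inequality: it is exactly the device that collapses the a priori involved optimisation over the filter $\bar\Lambda_B^{(i)}$ (over all Kraus decompositions and off-diagonal structure) into the three elementary facts about the single marginal $\sigma_A$, and getting the concavity of $\sqrt{\det}$ to interface cleanly with the pure-state overlap bound is where the real work sits.

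Finally, for tightness I will exhibit a saturating instance. Take $\cH_B^{(i)}=\mathds{C}^2$ and $\varrho_{AB}=\Phi_{\lambda_i}$ itself, a pure state whose $A$-marginal has larger eigenvalue exactly $\lambda_i$, and let $\bar\Lambda_B^{(i)}$ be the single-Kraus filter that leaves the $\ket{1}$ component untouched while attenuating the $\ket{0}$ component by $\kappa_0=\sqrt{(P_i-(1-\lambda_i))/\lambda_i}\le1$ (composed with the identity extraction, which here is already aligned with $\ket{\Phi^+}$). A direct computation then gives $\tr\sigma_{AB}=\lambda_i\kappa_0^2+(1-\lambda_i)=P_i$ and reproduces the prescribed fidelity $F_i$ in Eq.~\eqref{eq:40}, while the marginal eigenvalue equals $\lambda_i$. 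Consequently no representation $\varrho_{AB}=(u\otimes V)[\Phi_\lambda]$ with $\lambda>\lambda_i$ can reproduce these data, which establishes that the bound in Eq.~\eqref{app eq: LB p} cannot be improved.
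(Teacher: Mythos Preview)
Your proof is correct and takes a genuinely different route from the paper. The paper casts the maximisation of $\lambda$ as a semidefinite program over the (unnormalised) two-qubit operator $\sigma_{AB}$ and certifies the bound by exhibiting an explicit dual witness $T=A\otimes\id_2+c\,\id_4+d\,\ketbra{\Phi^+}$ with carefully chosen coefficients $c,d$ depending on $F_i$. Your argument instead reduces everything to three facts about the single-qubit marginal $\sigma_A$: the operator inequality $\sigma_A\preceq\varrho_A$, the fidelity--determinant bound $\bra{\Phi^+}\tau\ket{\Phi^+}\le\tfrac12+\sqrt{\det\tau_A}$ (via Schmidt decomposition and Minkowski concavity of $\det^{1/2}$), and a simple pinching argument to maximise $\det\sigma_A$ under the trace and ordering constraints. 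This is more elementary and more transparent geometrically: it makes explicit that the obstruction is purely a spectral constraint on Alice's marginal, whereas the paper's dual certificate, while systematic, obscures this structure. The tightness construction is essentially the same in both proofs --- your $\kappa_0^2$ coincides with the paper's $X$ once one simplifies --- and your argument that $\Phi_{\lambda_i}$ cannot arise from any $\Phi_\lambda$ with $\lambda>\lambda_i$ via $(u\otimes V)$ is cleaner than the paper's appeal to an entanglement monotone.
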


\begin{proof}
By Lemma \ref{lemma0} we know that the state $\varrho_{AB}$ is of the form
\be
\varrho_{AB}= (u\otimes \Xi)[\Phi_\lambda]
\ee
with $\ket{\Phi_\lambda}=\sqrt{\lambda}\ket{00}+\sqrt{1-\lambda}\ket{11}$ for some $\frac{1}{2} \leq \lambda\leq 1$. This proves the lemma for $F_i<\frac{1}{2}$. In the following we thus consider the more interesting case $F_i \geq \frac{1}{2}$.

Rewrite \Cref{eq:40} in the form
\begin{align}
\tr (u\otimes \bar{\Lambda}_B^{(i)} \circ \Xi)[\Phi_\lambda] = P_i \\
\bra{\Phi^+}  (u\otimes \bar{\Lambda}_B^{(i)} \circ \Xi)[\Phi_\lambda] \ket{\Phi^+} = F_i P_i .
\end{align}
Since $u\otimes \id \ket{\Phi^+}= \id \otimes u^T \ket{\Phi^+}$,  for $\bar \Lambda= u^T \circ \bar{\Lambda}_B^{(i)} \circ \Xi$ these equations are equivalent to 
\begin{align}
\tr (\t{id }\otimes \bar \Lambda)[\Phi_\lambda] = P_i \\
\bra{\Phi^+}  (\t{id}\otimes \bar \Lambda)[\Phi_\lambda] \ket{\Phi^+} = F_i P_i .
\end{align}
We are looking for  the maximal value of $\lambda\geq \frac{1}{2}$ compatible with these equations for a  CPTN-map $\bar \Lambda : B(\mathds{C}^2)\to B(\mathds{C}^2)$ of the form above. To do so define the operator 
\be
\sigma_{AB}\equiv(\t{id}\otimes \bar \Lambda)[\Phi_\lambda].
\ee
It has to be positive $\sigma_{AB} \succeq 0$. Furthermore, by construction the marginal state satisfies
\be
\sigma_A = \tr_B \sigma_{AB} \preceq \tr_B \Phi_\lambda =
\left( \begin{array}{cc}
\lambda & \\
& 1-\lambda
\end{array}\right),
\ee
expressed in the computational basis. The maximal value of $\lambda$ can thus be bounded by the following semidefinite program
\be\label{eq:sdp1}
\begin{split}
    \lambda_{max}(F,P) = \max_{\lambda,\sigma_{AB}} \, & \lambda \\
    \text{subject to } \, &\sigma_{AB} \succeq 0 \\
     & \tr_B \sigma_{AB} \preceq \left( \begin{array}{cc}
\lambda& \\
& 1-\lambda
\end{array}\right)
\\
& \tr \sigma_{AB} = P_i \\
& \tr \Phi^+ \sigma_{AB} = F_i P_i.
\end{split}
\ee

Define the operator $T=A\otimes\id_2 + c \id_4 + d\ketbra{\Phi^+}$ with $A=\left(\begin{array}{cc} 0 & 0 \\ 0 & 1\end{array}\right)$, $B=\left(\begin{array}{cc} 1 & 0 \\ 0 & -1\end{array}\right)$,  $c=\frac{F_i-1+\sqrt{F_i(1-F_i)}}{2(1-F_i)}$, and $d=\frac{1-2F_i}{2\sqrt{F_i(1-F_i)}}$. The smallest eigenvalue of $T$ is $\frac{F_i-1+\sqrt{F_i(1-F_i)}}{2(1-F_i)}$, which is positive whenever $F_i\geq 1/2$. Hence, in the case $F_i\geq 1/2$, we have $T\succeq 0$ and we can write for any $\sigma_{AB}\succeq 0$:
\begin{equation}
\begin{split}
    \lambda & \leq \lambda + \tr (\sigma_{AB} T)\\
    &= -\lambda \tr (B A) + \tr \left(\sigma_{AB}\left(A\otimes\id_2 + c\id_4 + d\ketbra{\Phi^+}\right)\right)\\
    &= \tr \left(\left(-\lambda B + \tr_B \left(\sigma_{AB}\right)\right) A\right) + c\tr (\sigma_{AB}) \\
    &+ d\bra{\Phi^+}\sigma_{AB}\ket{\Phi^+}\\
    &\leq \tr(A A) + c P_i + d F_i P_i,
\end{split}
\end{equation}
where we used the conditions in \eqref{eq:sdp1}, in particular the second constraint which can be rewritten as $-\lambda B+\tr_B \sigma_{AB}\succeq A$. We thus obtain the following upper bound on $p$:
\begin{equation}\label{ap: ub p}
    \lambda \leq 1 - \left(\frac12-\sqrt{F_i(1-F_i)}\right)P_i.
\end{equation}

 Finally, let us show that the bound is tight. First, let us assume that $F_i\geq \frac{1}{2}$ and consider two qubit initial  state $\varrho_{AB} =\Phi_\lambda$ with $ \lambda = 1 - \left(\frac12-\sqrt{F_i(1-F_i)}\right)P_i$ and a qubit to qubit CPTN map $\bar \Lambda_B^{(i)}$ with a single Kraus operator given by 
\be
\begin{split}
\bar  \Lambda_B^{(i)}[\rho] &= K \rho K^\dag
\qquad \text{with} \qquad K =
\left(
\begin{array}{cc}
\sqrt{X} & \\
& 1
\end{array} \right),
\\X &= { \scriptstyle \frac{4 F_i^2 P_i^2-4 F_i P_i^2+4 \sqrt{F_i-F_i^2} \left(P_i-P_i^2\right)-P_i^2+2 P_i}{4 F_i^2 P_i^2-4 F_i P_i^2+P_i^2-4 P_i+4}}.
\end{split}
\ee
It is straightforward to see that this combination of the state $\varrho_{AB}$ and the channel $\bar  \Lambda_B^{(i)}$ precisely yields the values
\be\begin{split}
 &\tr  (\id\otimes \bar  \Lambda_B^{(i)})[\rho_{AB}]  = 1- \lambda(1-X) =P_i,\\
&\frac{\bra{\Phi^+}(\id\otimes \bar  \Lambda_B^{(i)})[\rho_{AB}]\ket{\Phi^+}}{P_i}= \frac{\left(\sqrt{\lambda} \sqrt{X}+\sqrt{1-\lambda}\right)^2}{2 (1- \lambda (1-X))} =F_i.
\end{split}
\ee
Hence, the values $F_i\geq \frac{1}{2}$ and $P_i$ are compatible with an initial state $\Phi_\lambda$ with $\lambda = 1 - \left(\frac12-\sqrt{F_i(1-F_i)}\right)P_i \geq \frac{1}{2}$ saturating the bound \eqref{app eq: LB p}. Furthermore, it is clear the state $\varrho_{AB} =\Phi_\lambda$ can not be locally prepared
\be
\Phi_\lambda \neq (u \otimes V)[\Phi_{\lambda'}]
\ee
with deterministic maps from a state $\Phi_{\lambda'}$ with $\lambda'>\lambda$ (e.g.~because $\Phi_\lambda$ is more entangled than $\Phi_\lambda'$ as revealed by an entanglement measure such as negativity). Therefore, for $F_i\geq 1/2$ the bound \eqref{app eq: LB p} is tight. 

For $F_i<\frac{1}{2}$, the value initial state to consider is $\varrho_{AB}= \ketbra{11}$. Taking a simple CPTN map $\bar  \Lambda_B^{(i)}[\rho] =  P_i\, U \rho U^\dag$, with a unitary $U$ satisfying $|\bra{1} U \ket{1}|^2= 2 F_i$, one sees that this state is compatible with any values $P_i, F_i<\frac{1}{2}$. Moreover, here the bound $\lambda\leq p_i=1$ is tight by definition, as there is no physical state $\ket{\Phi_\lambda}$ with $\lambda>1$.
\end{proof}

We now have all the tools we need to prove the results presented in the main text.  For all the three scenarios we start by giving a formal statement of the result, and then present a proof thereof. We proceed in order of increasing complexity. 

\subsection{Scenario 2} \label{sec:scenario2}

\noindent \textbf{Result 2.} \textit{ For any state $\varrho_{AB}\in B(\mathds{C}_A^2\otimes \cH_B^{(i)})$ and a CPTN map $\widetilde{\t{QM}}$ satisfying
\be\begin{split}
\t\tr ( \t{id}\otimes \bar{\Lambda}_B^{(o)}\circ \widetilde{\t{QM}})[\varrho_{AB}] &\geq p_o \\
\bra{\Phi^+}\frac{(\t{id}\otimes \bar{\Lambda}_B^{(o)}\circ \widetilde{\t{QM}})[\varrho_{AB}]}{\tr (\t{id}\otimes \bar{\Lambda}_B^{(o)}\circ \widetilde{\t{QM}})[\varrho_{AB}]} \ket{\Phi^+}  &\geq f_o
\end{split}
\ee 
for some CPTN map $\bar{\Lambda}_B^{(o)} :B(\cH_B^{(o})\mapsto B(\mathds{C}^2)$, there exist CPTN maps $\bar \Lambda$ and $\bar V$ such that 
\be\label{eq: res 3}
\begin{split}
    \cF_{\t{id}}^{(\bar \Lambda,\bar V)}(\widetilde{\t{QM}}) &\geq f_o \\
     P_{\checkmark}^{(\bar \Lambda,\bar V)}(\widetilde{\t{QM}})  &\geq \frac{1}{2} p_o.
\end{split}
\ee
In addition, if $\bar{\Lambda}_B^{(o)}$ is trace preserving, $\bar \Lambda$ is also trace preserving.} 

\begin{proof}
By Lemma~\ref{lemma1} there also exists a CPTN map $\bar V$ such that
\be
 \begin{split}
     &\varrho_{AB} = \frac{(u \otimes \bar V)[\Phi^+]}{\tr (u\otimes \bar V)[\Phi^+]} \\
     &\tr (u\otimes \bar V)[\Phi^+] \geq \frac{1}{2}.
 \end{split}
 \ee
It remains to combine the two results. First for any CPTN extraction map $\bar \Lambda$ we have
\be\begin{split}
  \cF_{\t{id}}^{(\bar \Lambda,\bar V)}(\widetilde{\t{QM}})  
  &=  \bra{\Phi^+}\frac{(\t{id}\otimes \bar{\Lambda}\circ \widetilde{\t{QM}}\circ \bar V)[\Phi^+]}{\tr (\t{id}\otimes \bar{\Lambda}\circ \widetilde{\t{QM}}\circ \bar V)[\Phi^+]} \ket{\Phi^+}\\
  &=  \bra{\Phi^+}\frac{(u^\dag u \otimes \bar{\Lambda}\circ \widetilde{\t{QM}}\circ \bar V)[\Phi^+]}{\tr (\t{id}\otimes \bar{\Lambda}\circ \widetilde{\t{QM}}\circ \bar V)[\Phi^+]} \ket{\Phi^+}\\
  &=\bra{\Phi^+}\frac{(u^\dag \otimes \bar{\Lambda}\circ \widetilde{\t{QM}})[\varrho_{AB}]}{\tr (\t{id}\otimes \bar{\Lambda}\circ \widetilde{\t{QM}})[\varrho_{AB}]} \ket{\Phi^+}.
\end{split}
\ee
Now by $u \otimes \mathds{1} \ket{\Phi^+} = \mathds{1}\otimes u^T \ket{\Phi^+}$ we have 
\be
\bra{\Phi^+} (u^\dag\otimes \t{id})[ \rho] \ket{\Phi^+} = 
\bra{\Phi^+} (\t{id }\otimes u^*)[ \rho] \ket{\Phi^+}
\ee
and thus
\be
  \cF_{\t{id}}^{(\bar \Lambda,\bar V)}(\widetilde{\t{QM}})  
  =\bra{\Phi^+}\frac{(\t{id} \otimes u^*\circ \bar{\Lambda}\circ \widetilde{\t{QM}})[\varrho_{AB}]}{\tr (\t{id}\otimes u^* \circ \bar{\Lambda}\circ \widetilde{\t{QM}})[\varrho_{AB}]} \ket{\Phi^+}\\
\ee
where we used the invariance of the trace under unitaries. Finally,
by setting $\bar \Lambda \equiv u^T \circ \bar{\Lambda}_B^{(o)}$ we find
\be\begin{split}
  \cF_{\t{id}}^{(\bar \Lambda,\bar V)}(\widetilde{\t{QM}})  
  &=\bra{\Phi^+}\frac{(\t{id} \otimes u^* u ^T \circ \bar{\Lambda}_B^{(o)}\circ \widetilde{\t{QM}})[\varrho_{AB}]}{\tr (\t{id}\otimes  \bar{\Lambda}_B^{(o)}\circ \widetilde{\t{QM}})[\varrho_{AB}]} \ket{\Phi^+}\\
  &=\bra{\Phi^+}\frac{(\t{id} \otimes \bar{\Lambda}_B^{(o)}\circ \widetilde{\t{QM}})[\varrho_{AB}]}{\tr (\t{id}\otimes  \bar{\Lambda}_B^{(o)}\circ \widetilde{\t{QM}})[\varrho_{AB}]} \ket{\Phi^+}\\
  &\geq f_o. 
\end{split}
\ee
For the success probability we have 
\be\begin{split}
 P_{\checkmark}^{(\bar \Lambda,\bar V)}(\widetilde{\t{QM}}) &=
 \tr (\t{id}\otimes \bar{\Lambda}\circ \widetilde{\t{QM}}\circ \bar V)[\Phi^+] \\
    &\geq \frac{1}{2} \tr ( \t{id}\otimes \bar{\Lambda}_B^{(o)}\circ \widetilde{\t{QM}})[\varrho_{AB}] \\
    &\geq \frac{1}{2} p_o.
\end{split}
\ee
Finally note that if $\bar{\Lambda}_B^{(o)}$, the extraction map $\bar \Lambda \equiv u^T \circ \bar{\Lambda}_B^{(o)}$ is also trace preserving by construction, concluding the proof.
\end{proof}

\subsection{Scenario 1} \label{sec:scenario1}

In the case of scenario 1 all the considered maps are deterministic. Before stating the result let us start by expression our goal more formally. For any instance  of $\varrho_{AB}, \widetilde{\t{QM}}$,  $\Lambda_B^{(i)}$ and $\Lambda_B^{(o)}$ satisfying $F_i\geq f_i$ and $F_o\geq f_o$, we wish to find the injection/extraction maps $\Lambda, V$ which maximize the Choi fidelity of the memory channel $\widetilde{\t{QM}}$. Hence, our goal can be expressed as the following optimization problem
\be
\min_{\substack{\varrho_{AB}, \widetilde{\t{QM}},  \Lambda_B^{(i)},\Lambda_B^{(o)}\\
\text{such that}\,\, F_{i/o}\geq f_{i/o}}}  \max_{\Lambda,\bar V} \,\cF_{\t{id}}^{(\Lambda, V)}(\widetilde{\t{QM}}).
\ee
Solving such a min-max problem is a priori not straightforward. Nevertheless, we can lower bound this quantity by considering an explicit construction of the maps $\Lambda$ and $V$ (Result 1). In addition, we can also upper bound this quantity by considering a particular realization $\{\varrho_{AB}, \widetilde{\t{QM}}, \bar \Lambda_B^{(i)},\bar \Lambda_B^{(o)}\}$ (Result 1'). Remarkably, the lower and upper bounds coincide in the parameter region of interest, showing that our bound is tight in this regime.\\

\noindent \textbf{Result 1.} \textit{ Consider a state $\varrho_{AB}\in B(\mathds{C}_A^2\otimes \cH_B^{(i)})$ and CPTP maps $\widetilde{\t{QM}},$  $\Lambda_B^{(i)}$ and $\Lambda_B^{(o)}$, such that  $(\t{id}\otimes \Lambda_B^{(i)})[\varrho_{AB}]$ and  $(\t{id}\otimes \Lambda_B^{(o)}\circ \widetilde{\t{QM}})[\varrho_{AB}]$ are two qubit states satisfying}
\begin{align}\label{eq: det Fi}
\bra{\Phi^+} (\t{id}\otimes \Lambda_B^{(i)})[\varrho_{AB}]\ket{\Phi^+}  &\geq f_i
\\  \label{eq: det Fo}
\bra{\Phi^+}(\t{id}\otimes \Lambda_B^{(o)}\circ \widetilde{\t{QM}})[\varrho_{AB}] \ket{\Phi^+} &\geq f_o\geq \frac{1}{2}.
\end{align} \textit{
There exist CPTP maps $\Lambda$ and $V$ such that }
\be\label{app eq: result 1 bound}
\cF_{\t{id}}^{(\Lambda,V)}(\widetilde{\t{QM}}) \geq 
\begin{cases}
\frac{f_o+\sqrt{2f_o-1}}{2} & \lambda_i\geq\frac{1}{2f_o}\\
\left(\frac{\sqrt{2f_o}-(\sqrt{\lambda_i}-\sqrt{1-\lambda_i}) }{2\sqrt{1-\lambda_i}}\right)^2 & \textit{otherwise}
\end{cases}
\ee
\textit{with} $\lambda_i \equiv \begin{cases}
\frac{1}{2}+\sqrt{f_i(1-f_i)} & f_i\geq \frac{1}{2} \\
1 & f_i <\frac{1}{2}\end{cases}$.
\begin{proof}
As a first step we want to exploit the inequality~\eqref{eq: det Fi} in order to learn something about the initial state $\varrho_{AB}$. Here we may use our Lemma~\ref{lemma2} for the particular case $P_i=1$. It guarantees the existence of a unitary $u$ and a CPTP map $V$ such that
\be
\begin{split}
&\varrho_{AB} = (u \otimes V)[\Phi_\lambda] \quad \text{with}
\\
&\frac{1}{2} \leq \lambda \leq \lambda_i \equiv
\begin{cases} \frac{1}{2}+\sqrt{f_i(1-f_i)} & f_i\geq \frac{1}{2} \\
1 & f_i< \frac{1}{2}
\end{cases}.
\end{split}\ee
We can thus rewrite Eq.~\eqref{eq: det Fo} with
\be\begin{split}
&\bra{\Phi^+}(\t{id}\otimes \Lambda_B^{(o)}\circ \widetilde{\t{QM}})[\varrho_{AB}] \ket{\Phi^+}\\
& =\bra{\Phi^+}(u \otimes \Lambda_B^{(o)}\circ \widetilde{\t{QM}}\circ V)[\Phi_\lambda] \ket{\Phi^+} \\
&= \bra{\Phi^+}(\t{id }\otimes u^T\circ \Lambda_B^{(o)}\circ \widetilde{\t{QM}}\circ V)[\Phi_\lambda] \ket{\Phi^+} \\
& = \bra{\Phi^+}(\t{id }\otimes \Lambda\circ \widetilde{\t{QM}} \circ V)[\Phi_\lambda] \ket{\Phi^+},
\end{split}
\ee
where we defined $\Lambda \equiv u^T\circ \Lambda_B^{(o)}$. At this point we have defined promising maps $\Lambda$ and $V$ defing the the qubit map $\cE \equiv \Lambda \circ \widetilde{\t{QM}}\circ V$, and established that the bounds of Eqs.~(\ref{eq: det Fi},\ref{eq: det Fo}) gurantee that  
\be\label{eq: r2 Fo}
\bra{\Phi^+}(\t{id }\otimes \cE) [\Phi_\lambda] \ket{\Phi^+} \geq f_o.
\ee
holds for some $\frac{1}{2}\leq \lambda \leq \lambda_i$. \\

Let us now assume a fixed value of $\lambda$ in this interval. We are interested in the  quantity
\be
\cF_{\t{id}}^{(\Lambda,V)}(\widetilde{\t{QM}}) = \bra{\Phi^+}(\t{id }\otimes \cE)[\Phi^+] \ket{\Phi^+},
\ee
which is different from left had side of \eqref{eq: r2 Fo}.  Nevertheless, the two quantities are not independent, and a lower bound on $\cF_{\t{id}}^{(\Lambda,V)}(\widetilde{\t{QM}})\geq G(f_o,\lambda)$ follows from the following optimization
\be\label{eq app: mini app}
\begin{split}
  G(f_o,\lambda)=  \min_{\cE} &\bra{\Phi^+}(\t{id }\otimes \cE)[\Phi^+] \ket{\Phi^+} \\
    \text{subject to} & \bra{\Phi^+}(\t{id }\otimes \cE) [\Phi_\lambda] \ket{\Phi^+} \geq f_o.
\end{split}
\ee
This minimization is in fact an SDP and can be solved efficiently. To see this define the Choi state $\sigma\equiv (\t{id}\otimes\cE)[\Phi^+]$ associated to the CPTP map $\cE$. By construction the Choi state satisfies
\be
\sigma_A = \tr_B \sigma = \frac{1}{2} \mathds{1}.
\ee
Furthermore, there is a one-to-one correspondence between positive semi-definite operators $\sigma\succeq 0$ satisfying this constraint and CPTP maps $\cE$. In terms of $\sigma$ our goal function~\eqref{eq app: mini app} simply reads 
\be
\cF_{\t{id}}^{(\Lambda,V)}(\widetilde{\t{QM}}) = \tr \sigma \Phi^+.
\ee
Next, note that the two state $\Phi_\lambda$ and $\Phi^+$ are related by 
\begin{align}\label{eq: smth1}
\ket{\Phi_\lambda} &=  T \ket{\Phi^+}\\ 
T &= \left(\begin{array}{cc} 
\sqrt{2\lambda} & \\
& \sqrt{2(1-\lambda)}
\end{array} \right)\otimes \id .
\end{align}
Hence, the state resulting from the action of $\text{id}\otimes \cE$ on $\Phi_\lambda$ reads $(\id \otimes \cE)[\Phi_\lambda] = T \sigma \, T$. Therefore, in term of $\sigma$ the constraint in Eq.~\eqref{eq app: mini app} reads
\be
\tr  T \sigma \, T \Phi^+ \geq f_o.
\ee
This allows us to rewrite Eq.~\eqref{eq app: mini app} in a form which is manifestly an SDP
\be\begin{split}\label{eq:sdp2}
G(f_o,\lambda) = \min_{\sigma}\, & \tr \sigma \, \Phi^+ \\
\t{s.t.}\, &\tr T \sigma T \, \Phi^+ \geq f_o\\
&\sigma_A = \tr_B \sigma = \frac{1}{2} \id \\
& \sigma \succeq 0.
\end{split}
\ee\\

It is however important to recall that we can only certify that $\lambda$ is below a certain value, but not above. So the final bound on the fidelity is given by
\be
\cF_{\t{id}}^{(\Lambda,V)}(\widetilde{\t{QM}}) \geq F'(f_o ,f_i) = \min_{\frac{1}{2}\leq \lambda\leq \lambda_i}   G(f_o,\lambda).
\ee
\bigskip

Let's now compute this function $F'$. First, focusing on $G(f_o,\lambda)$ defined in \eqref{eq:sdp2}, we consider the operator
\begin{equation}
    W=\Phi^+ - a \,T\Phi^+T + b \ketbra{0}\otimes \id.
\end{equation}
One can check that the eigenvalues of $W$ are positive whenever $\lambda\geq1/2$ and $f_o\geq 1/2$ with the following choice of parameters:
\begin{equation}
    \begin{split}
        a&=\frac{\sqrt{2f_o}-(\sqrt{\lambda}-\sqrt{1-\lambda})}{2\sqrt{2f_o}(1-p)}\\
        b&=\frac{\sqrt{2f_o} - (\sqrt{\lambda}-\sqrt{1-\lambda})}{2(1-\lambda)}(\sqrt{\lambda}-\sqrt{1-\lambda}).
    \end{split}
\end{equation}
Therefore, the objective function of \eqref{eq:sdp2} can be bounded as
\be\begin{split}
    \tr \sigma \Phi^+ &\geq a \tr \sigma T\Phi^+ T - b \tr \sigma  \ketbra{0}\otimes \id_2 \\
    & \geq a f_o - b \tr_A \sigma_A \ketbra{0} \\
    & = a f_o - \frac{b}{2}
    \\
    &= \left(\frac{\sqrt{2f_o} - (\sqrt{\lambda} - \sqrt{1-\lambda})}{2\sqrt{1-\lambda}}\right)^2,
\end{split}
\ee
where we used the fact that $a\geq 0$, which is valid when $f_o\geq1/2$. 

We shall now consider $G(f_o,\lambda)$ as a function of $p$ in the range  $\lambda\in \left[\frac{1}{2}, \lambda_i \right]$. We note that  $G(f_o,\lambda)$ has a unique local minimum at $\lambda=\frac{1}{2 f_o}$, yielding
\begin{equation}
    F'(f_o,f_i) = \frac{f_o+\sqrt{2f_o-1}}{2}
\end{equation}
whenever the local minimum is inside the interval $\frac{1}{2f_o}\in \left[\frac{1}{2}, \lambda_i \right]$. Here,  $\frac{1}{2}\leq \frac{1}{2f_0}\leq 1$ so
\be
\frac{1}{2f_o}\in \left[\frac{1}{2}, \lambda_i \right]
\Leftrightarrow \frac{1}{2f_o}\leq \lambda_i
\ee 

When this condition is not met, i.e.~the local minimum is on the right of the interval $\frac{1}{2f_o}> \lambda_i$, the function $G(f_o, \lambda_i)$ is decreasing for $\lambda\in \left[\frac{1}{2},\lambda_i\right]$. Hence its minimum is attained for $F'(f_o,f_i) = G(f_o, \lambda_i)$. Combining the two cases gives 
\be
F'(f_o,f_i) = 
\begin{cases}
\frac{f_o+\sqrt{2f_o-1}}{2} & \frac{1}{2f_o}\leq \lambda_i \\
\left(\frac{\sqrt{2f_o} - (\sqrt{\lambda_i} - \sqrt{1-\lambda_i})}{2\sqrt{1-\lambda_i}}\right)^2 & \frac{1}{2f_o}>  \lambda_i
\end{cases}.
\ee
This concludes the proof of the bound.
\end{proof}

Let us now show that the bound provided by the this result is tight whenever the certified memory fidelity exceeds 50\%. This follows from the following result.

\noindent \textbf{Result 1'.} \textit{ There exists a state $\varrho_{AB}\in B(\mathds{C}_A^2\otimes \cH_B^{(i)})$ and CPTP maps $\widetilde{\t{QM}},$  $\Lambda_B^{(i)}$ and $\Lambda_B^{(o)}$ achieving}
\begin{align}
\bra{\Phi^+} (\t{id}\otimes \Lambda_B^{(i)})[\varrho_{AB}]\ket{\Phi^+}  & =f_i
\\
\bra{\Phi^+}(\t{id}\otimes \Lambda_B^{(o)}\circ \widetilde{\t{QM}})[\varrho_{AB}] \ket{\Phi^+} & = f_o\geq \frac{1}{2} \label{eq app fo r1p}
\end{align} \textit{
such that  }
\be\label{app eq: result 1 bound dfadfa}
\max_{\Lambda,V}\cF_{\t{id}}^{(\Lambda,V)}(\widetilde{\t{QM}}) \leq 
\begin{cases}
\frac{f_o+\sqrt{2f_o-1}}{2} & \lambda_i\geq\frac{1}{2f_o}\\
\left(\frac{\sqrt{2f_o}-(\sqrt{\lambda_i}-\sqrt{1-\lambda_i}) }{2\sqrt{1-\lambda_i}}\right)^2 & \textit{otherwise}
\end{cases}
\ee
\textit{with $\lambda_i \equiv \begin{cases}
\frac{1}{2}+\sqrt{f_i(1-f_i)} & f_i\geq \frac{1}{2} \\
1 & f_i <\frac{1}{2}\end{cases}$, whenever the right hand side of Eq.~\eqref{app eq: result 1 bound dfadfa} is larger or equal to $1/2$.}

\begin{proof}
To start we define a particular realization of of $\varrho_{AB}, \widetilde{\t{QM}}$  $\Lambda_B^{(i)}$ and $\Lambda_B^{(o)}$. 

For the state we take $\varrho_{AB}= \Phi_\lambda$ without fixing the value $\lambda$ yet. Note however that all values $\frac{1}{2}\leq \lambda\leq \lambda_i$ can saturate
\be
\bra{\Phi^+} (\t{id}\otimes \Lambda_B^{(i)})[\Phi_\lambda]\ket{\Phi^+}  = F_i.
\ee 
To acheive equality one may take a qubit unitary map $\Lambda_B^{(i)}[\rho] = U^{(i)} \rho U^{(i)\dag}$ with 
\be \frac{1}{2}|\sqrt{\lambda}\bra{0}U^{(i)}\ket{0} +\sqrt{1-\lambda}\bra{1}U^{(i)}\ket{1}|^2 = f_i.
\ee
This is possible because the expression at the left hand side may take all values between 0 and $\frac{(\sqrt{\lambda}+\sqrt{1-\lambda})^2}{2}$, and furthermore $\frac{(\sqrt{\lambda}+\sqrt{1-\lambda})^2}{2}$ is a decreasing function of $\lambda\geq 1/2$ and  
\be
\frac{(\sqrt{\lambda_i}+\sqrt{1-\lambda_i})^2}{2} =\begin{cases}
f_i & f_i \geq 1/2 \\
1/2 & f_i < 1/2 
\end{cases} \qquad \geq f_i.
\ee

Next, for the memory channel $\widetilde{\t{QM}}$ we take the qubit map $\cE$ which minimizes the function $G(f_o,\lambda)$, and we let the extraction map to be trivial $\Lambda_B^{(o)}=\text{id}$. From the solution of the SDP we know that $\cE$ corresponds to the Choi operator 
\begin{equation} \label{app eq: sigma solution}
    \sigma = \left(\begin{array}{cccc}
    \frac{1}{2} & 0 & 0 & \frac{\sqrt{2f_o}-\sqrt{\lambda}}{2\sqrt{1-\lambda}}\\
    0 & 0 & 0 & 0\\
    0 & 0 & \frac12-\frac{(\sqrt{2f_o}-\sqrt{\lambda})^2}{2(1-\lambda)} & 0\\
    \frac{\sqrt{2f_o}-\sqrt{\lambda}}{2\sqrt{1-\lambda}} & 0 & 0 & \frac{(\sqrt{2f_o}-\sqrt{\lambda})^2}{2(1-\lambda)}\\
    \end{array}\right),
\end{equation}
which is a valid Choi state for $\lambda\leq \frac{1}{2 f_o}$. We will this further restrict the range of considered states to $\frac{1}{2}\leq \lambda\leq \min(\frac{1}{2 f_o},\lambda_i)$. The state $\sigma$ is rank-2 and can be decomposed as 
\be\begin{split}
\sigma &= \left(\frac12-\frac{(\sqrt{2f_o}-\sqrt{\lambda})^2}{2(1-\lambda)}\right) \ketbra{10} \\
    & +\left(\frac12+\frac{(\sqrt{2f_o}-\sqrt{\lambda})^2}{2(1-\lambda)}\right) \Psi\\
    \ket{\Psi}&= \frac{ \sqrt{1-\lambda}\ket{00} +(\sqrt{2 f_o}-\sqrt p)\ket{11}}{\sqrt{1+2 f_o - 2\sqrt{2f_o \lambda}}}.
\end{split}
\ee
This implies that the channel $\cE$ saturating the bound is an ``amplitude damping'' channel represented by a set of two Kraus operators
\be\begin{split}
\cE[\rho] &= K_0  \rho K_0^\dag +  K_1 \rho K_1^\dag  \\
K_0 &= \sqrt{\xi} \ketbra{0}{1} \\
 K_1 & = \sqrt{\id - K_0^\dag K_0} = 
 \left(\begin{array}{cc}
 1 & 0\\
 0 & \sqrt{1-\xi}
 \end{array}\right)\\
 \xi &\equiv 1-\frac{(\sqrt{2f_o}-\sqrt{\lambda})^2}{(1-\lambda)}.
\end{split}
\ee
One can easily verify that this choice of the state $\varrho_{AB}$, the memory 
 $\widetilde{\t{QM}} =\cE$ and extraction $\Lambda_B^{(o)}=\text{id}$ satisfy
\be
\bra{\Phi^+}(\t{id }\otimes \cE) [\Phi_\lambda] \ket{\Phi^+} = f_o,
\ee
as it is the solution of the SDP.

In summary, we have now shown that for all the realisations of the memory channel $\widetilde{\t{QM}} =\cE$ with $\frac{1}{2}\leq \lambda\leq \min(\frac{1}{2 f_o},\lambda_i)$ the relations $F_i=f_i$ and $F_o=f_o$ can be satisfied. To derive an upper bound on $\max_{\Lambda,V}\cF_{\t{id}}^{(\Lambda,V)}(\widetilde{\t{QM}})$ which holds fo all the set of channels $\widetilde{\t{QM}}$ compatible with  $F_i=f_i$ and $F_o=f_o$ it is thus sufficient to upper bound the Choi fidelity of the channel $\cE$  for all possible qubit CPTP maps $\Lambda,V$. That is upper-bounding the following function
\be
\Theta(f_o,\lambda) =\max_{\Lambda,V} \bra{ \Phi^+} (\text{id}\otimes \Lambda\circ \cE \circ V)[\Phi^+]\ket{\Phi^+}.
\ee

Let us rewrite this expression in a more convenient form. We have
\be\begin{split}
&\bra{ \Phi^+} (\text{id}\otimes \Lambda\circ \cE \circ V)[\Phi^+]\ket{\Phi^+} \\
&=\tr (\t{id}\otimes \Lambda^*)[ \Phi^+]  (\text{id}\otimes \cE \circ V)[\Phi^+] \\
 & =    \tr ( \Lambda^{T*}\otimes \t{id})[ \Phi^+]  (\text{id}\otimes \cE \circ V)[\Phi^+]
\\
& =    \tr  \Phi^+  (\Lambda^T\otimes \cE \circ V)[\Phi^+] \\
& = \tr ( \t{id}\otimes \cE^*)[ \Phi^+]  (\Lambda^T \otimes  V)[\Phi^+],
\end{split}
 \ee
 where $ \Lambda^T$ is the transposed map of $\Lambda$ (it has all Kraus opearators transposed). Let us now define and diagonalize the operator appearing in the last line
\begin{align}
\omega &\equiv (\t{id}\otimes \cE^*)[\Phi^+]\\
& = \frac{2-\xi}{2} \ketbra{\phi} + \frac{\xi}{2} \ketbra{01}\\
\ket{\phi} &= \frac{1}{\sqrt{2-\xi}} \ket{00}+ \sqrt{1-\xi}\ket{11}
\end{align}
Since, $\Lambda^T$ and $\Lambda$ run over the same set we find that 
\be
\Theta(F_o,p) =\max_{\Lambda,V} \tr \omega \, (\Lambda\otimes V)[\Phi^+].
\ee

It will be convenient to expand the states of interest = in the Pauli basis
\be\begin{split}
\Phi^+ & = \frac{1}{4}\left(\sigma_0\otimes\sigma_0 + \sigma_x\otimes\sigma_x-\sigma_y\otimes\sigma_y + \sigma_z\otimes\sigma_z \right)\\
\omega &= \sum_{i,j=0}^3 \Omega_{ij} \, \sigma_i\otimes \sigma_j \\
\Omega&= \left( \begin{array}{cccc}
1&&&\\
&{1-\xi}&&\\
&&-\sqrt{1-\xi}&\\
\xi&&&1-\xi
\end{array}\right) 
\end{split}
\ee
Recall that any single qubit CPTP map admits a representation as an affine transformation of the Block sphere
\be\begin{split}
\Lambda[\id] &= \id + \bm \sigma^\dag \bm a(\Lambda)   \\
\Lambda[ \sigma^\dag \bm v] &=\bm \sigma^\dag M_\Lambda \bm v.
\end{split}
\ee
given by a real $3$-vector $\bm a(\Lambda)$ and a $3\times 3$ real matrix $M_\Lambda$. This representation allow us to express the the state $\Phi^+$ transformed by the maps $\Lambda$ and $V$ in the Pauli basis as
\be\begin{split}
 (\Lambda \otimes  V)[\Phi^+] &=\frac{1}{4}\big(\sigma_0 + \bm \sigma^\dag \bm a(\Lambda)\big)\otimes \big(\sigma_0 + \bm \sigma^\dag \bm a(V)\big) \\ &+    \frac{1}{4} \sum_{i,j=1}^3 (M_\Lambda 
\underbrace{\left(\begin{array}{ccc}
1 & &\\
&-1& \\
&&1
\end{array}\right)}_{\equiv \mathds{J}}
M^T_V )_{ij} \sigma_i \otimes \sigma_j,
\end{split}
\ee
and expres the goal function as
\be\begin{split}
&  \tr \left(\omega (\Lambda \otimes  V)[\Phi^+]\right) = \frac{1}{4}\Biggl[1+ \xi a_3(\Lambda) + \\&\tr M_\Lambda 
\mathds{J}
M^T_V 
\mathds{J} \underbrace{\left(\begin{array}{ccc}
\sqrt{1-\xi} & &\\
&\sqrt{1-\xi}& \\
&&1-\xi
\end{array}\right)}_{\equiv \Xi }\Biggr]. 
\end{split}\ee

Now we can upper bound the last term with the help of von Neumann's trace inequality combined with H\"older's inequality for Schatten norms  
\be
\tr A B \leq \|A B\|_1 \leq \|A\|_1 \|B\|_\infty
\ee
with $\|A\|_1= \tr \sqrt{ A^\dag A}$ (the sum of singular values of $A$) and $\| B\|_\infty $ the operator norm (maximal singular value of $B$) to obtain
\be\begin{split}
\tr \mathds{J} \Xi M_\Lambda 
\mathds{J}
M^T_V 
 &\leq 
\|\mathds{J}\Xi M_\Lambda 
\|_1 \, \|\mathds{J} M_V^T \|_\infty
\\
&\leq  \|\mathds{J}\Xi M_\Lambda 
\|_1.
\end{split}
\ee
Here we used $\|\mathds{J} M_V^T\|_\infty=\|M_V^T\|_\infty \leq 1$ (a  CPTP map does not take vectors outside of the Bloch sphere). With the help of the singular value decomposition of the matrix $M_V = R D R'$ with nonegative diagonal $D$ and orthogonal $R,R'$ we can further simplify the last expression to 
\be\begin{split}
\|\mathds{J}\Xi M_\Lambda\|_1 &= \|\mathds{J}\Xi R D R'\|_1 \\
& =
\|\Xi R D \|_1
\end{split}.
\ee

Now we will use the fact that our figure of merite is linear in both maps $\Lambda$ and $V$. Hence it's maximum is attained for extremal maps, i.e.~those which can not be decomposed as convex mixture of other maps. For a single qubit extremal CPTP maps have at most two Kraus operators~\cite{verstraete2002quantum}, and their affine representation is given by~\cite{valcarce2020minimum} 
\be\begin{split}
\bm a &= R 
\left(\begin{array}{ccc} 0 &0 &\cos^2 a-
\cos^2 b\end{array}\right) = (\cos^2 a-
\cos^2 b) R 
\bm e_z\\
    M &= R D(a,b) R'\\
    D(a,b) &=
\left(\begin{array}{ccc} 
\cos (a-b)& &\\
& \cos (a+b) 
& \\
& & \cos^2 a +\cos^2 b -1
\end{array}\right)
\end{split}
\ee
with scalar parameters $a,b \in[0,\pi/2]$ and arbitrary rotation matrices $R,R'$. Note that these expression are invariant under the exchange of $a$ and $b$ together with the following modification of the rotation matrices
\begin{align}\nonumber
R \to R \left(\begin{array}{ccc} 
1& &\\
& -1 & \\
& & -1
\end{array}\right) \quad
R' \to  \left(\begin{array}{ccc} 
1& &\\
& -1 & \\
& & -1
\end{array}\right) R' .
\end{align}
Hence we can also take $a\leq b$ and $\cos^2 a \geq \cos^2 b$ without loss of generality.

Plugging these expression in the bound derived above we find
\be\begin{split}
&  \tr \left(\omega (\Lambda \otimes  V)[\Phi^+]\right) \\
&\leq \frac{1}{4}\left(1+ \xi \,  (\cos^2 a-
\cos^2 b) \bm e_z^T R \bm e_z
+\left\|\Xi
R 
D(a,b)
\right\|_1 \right).
\end{split}\ee
 One recalls that here
\be
\Xi = \left(\begin{array}{ccc}
\sqrt{1-\xi} & &\\
&\sqrt{1-\xi}& \\
&&1-\xi
\end{array}\right)
\ee
with $\sqrt{1-\xi}\geq 1-\xi$ and $R$ is an arbitrary rotation matrix. 

Let us now consider two cases separately, starting with $a+b\leq \pi/2$.  
To bound the 'vector' terms we can simply use $\bm e_z^T R \bm e_z\leq 1$.
While for the 'matrix'  term we use
\be
\|A B\|_1 = \sum_i \sigma_i(AB) \leq \sum_i \sigma_i(A) \sigma_i(B),
\ee
where $\{\sigma_1(A),\sigma_2(A),...\}$ list the singular values of operators in decreasing order $\sigma_i(A)\geq \sigma_{i+1}(A)$~\cite{marshall1979inequalities}, implying
\be
\left\|\Xi
R 
D(a,b)
\right\|_1 \leq \sum_{i=1}^3 \sigma_i(\Xi) \sigma_i(D(a,b))
\ee

For $a+b\leq \pi/2$ (or $\cos^2 a +\cos^2 b\leq 1$) one easily verifies that the singular values of $D(a,b)$ satisfy
$\cos(a-b)\geq \cos(a+b)\geq \cos^2 a + \cos^2 b -1 \geq 0$ and 
\be\begin{split}
& \left\|\Xi R 
D(a,b)
\right\|_1 \\ 
&\leq 2 \cos^2 a \cos^2 b\sqrt{1-\xi} + (\cos^2 a + \cos^2 b -1)(1- \xi).
\end{split}\ee
Adding the other terms we find
\be \label{app: ub 2}\begin{split} &\tr \left(\omega (\Lambda \otimes  V)[\Phi^+]\right) 
\\&\leq 
\frac{1}{4}\Big(1+  (\cos^2 a - \cos^2 b)\xi  \\
&+2 \cos^2 a \cos^2 b\sqrt{1-\xi} + (\cos^2 a + \cos^2 b -1)(1- \xi) \Big)
\\
&\leq \frac{1}{4}\big(1+\xi + \cos^2 b(1+ 2 \sqrt{1-\xi}-2\xi) \big),
\end{split}
\ee
where $\cos^2 a$ comes with a positive factor so we set $\cos^2 a = 1$ to obtain an upper bound. The maximum of the last expression depends of the sign of the term  $(1-2\xi +2\sqrt{1-\xi})$ in front of the $\cos^2 b$ term, and is attained by setting $\cos^2 b=0$ or $1$ (in agreement with $a+b\leq \pi/2$). Concretely, we obtin
\be
 \tr \omega (\Lambda \otimes  V)[\Phi^+]\leq \begin{cases}
\frac{1}{4}\big(2-\xi + 2 \sqrt{1-\xi}\big) & \xi\leq \frac{\sqrt 3}{2}\\
\frac{1}{4}\big(1+\xi\big) & \xi> \frac{\sqrt 3}{2}\\
\end{cases}.
\ee

For the case $a+b > \pi/2$  (or $ \sin^2 b + \sin^2 a \geq 1$) the singular values of $D(a,b)$ are
$\cos(a-b)\geq -\cos(a+b)\geq 1-\cos^2 a - \cos^2 b \geq 0$, yielding for the matrix term
\be\begin{split}
& \left\|\Xi R 
D(a,b)
\right\|_1 \\ 
&\leq 2 \sin^2 a \sin^2 b\sqrt{1-\xi} + (\sin^2 a + \sin^2 b -1)(1- \xi).
\end{split}
\ee
In addition we can also rewrite $\cos^2 a - \cos^2 b = \sin^2 b -\sin^2 a$ and express the overall bound as 
\be\label{eq app: upper bound} \begin{split} &\tr \left(\omega (\Lambda \otimes  V)[\Phi^+]\right) 
\\&\leq 
\frac{1}{4}\Big(1+  (\sin^2 b - \sin^2 a)\xi  \\
&+2 \sin^2 a \sin^2 b\sqrt{1-\xi} + (\sin^2 a + \sin^2 b -1)(1- \xi) \Big).
\end{split}
\ee
We recover the expression in Eq.~\eqref{app: ub 2} upon relabeling $\sin^2 b \to \cos^2 a$ and  $\sin^2 a \to \cos^2 b$. Hence, the upper bound derived for $a+b\leq \pi/2$ also holds here.

Finally, let us have a closer look at the upper bound ~\eqref{eq app: upper bound} that we found on the Choi fidelity of the map $ \cE$ that we have found
\[
\Theta(F_o,p) \leq
\begin{cases}
\frac{1}{4}\big(2-\xi + 2 \sqrt{1-\xi}\big) & \xi\leq \frac{\sqrt 3}{2}\\
\frac{1}{4}\big(1+\xi\big) & \xi> \frac{\sqrt 3}{2}\\
\end{cases}
\]
Remarkably, in the first branch ($\xi\leq \frac{\sqrt 3}{2}$) the expression at the right hand side becomes
\[
\frac{2-\xi + 2 \sqrt{1-\xi}}{4}=\left( \frac{\sqrt{2f_o}-\sqrt{\lambda}+\sqrt{1+\lambda}}{2\sqrt{1+\lambda}}\right), 
\]
which is tight with the lower-bound $G(f_o,\lambda)$. Furthermore, the other branch ($\xi> \frac{\sqrt 3}{2}$) corresponds to a regime where $G(f_o,\lambda),\Theta(f_o,\lambda) \leq \frac{1}{2}$.  Hence, we have $\Theta(f_o,\lambda) = G(f_o,\lambda)$ whenever $G(f_o,\lambda)\leq \frac{1}{2}$. Now since we are looking for an upper bound we will chose the value $\lambda$ in its range of validity $\frac{1}{2}\leq \lambda\leq \min(\frac{1}{2 f_o},\lambda_i)$ to minimize $\Theta(f_o,\lambda)$. But this is exactly how the function $F'(f_o,f_i)$ was constructed from $G(f_o,\lambda)$. We this conclude that
\be
\min_{\text{valid }\lambda } \max_{\Lambda,V}\cF_{\t{id}}^{(\Lambda,V)}(\widetilde{\t{QM}}) \leq \min_{\text{valid }\lambda } \Theta(f_o,\lambda) = F'(f_o,f_i)
\ee
whenever $F'(f_o,f_i)\geq 1/2$, which concludes the proof.
\end{proof}

\subsection{Scenario 3} \label{sec:scenario3}

\noindent \textbf{Result 3.} 
\textit{Consider a state $\varrho_{AB}\in B(\mathds{C}_A^2\otimes \cH_B^{(i)})$ and CPTN maps $\widetilde{\t{QM}},$  $\bar \Lambda_B^{(i)}$ and $\bar \Lambda_B^{(o)}$ such that}
\begin{align}\label{eq: Fi app}
\bra{\Phi^+} \frac{(\t{id}\otimes \bar{\Lambda}_B^{(i)})[\varrho_{AB}]}{\tr (\t{id}\otimes \bar{\Lambda}_B^{(i)})[\varrho_{AB}]}\ket{\Phi^+}&\geq f_i
\\  
\label{eq: Pi app}
\tr (\t{id}\otimes \bar{\Lambda}_B^{(i)})[\varrho_{AB}] &\geq p_i \\ \label{eq: Fo app}
\bra{\Phi^+}\frac{(\t{id}\otimes \bar{\Lambda}_B^{(o)}\circ \widetilde{\t{QM}})[\varrho_{AB}]}{\tr (\t{id}\otimes \bar{\Lambda}_B^{(o)}\circ \widetilde{\t{QM}})[\varrho_{AB}]} \ket{\Phi^+} =F_o &\geq f_o \\
\label{eq: Po app}
 \tr ( \t{id}\otimes \bar{\Lambda}_B^{(o)}\circ \widetilde{\t{QM}})[\varrho_{AB}] =P_o &\geq p_o.
\end{align}
\textit{ There exist a CPTN map $\bar{\Lambda}$ and a CPTP map $V$ such that 
\be\label{eq: res 2}
\begin{split}
    \cF_{\t{id}}^{(\bar{\Lambda},V)}(\widetilde{\t{QM}}) &\geq  \text{f}_{\min}(f_o,f_i,p_o,p_i)  \\
     P_{\checkmark}^{(\bar \Lambda,V)}(\widetilde{\t{QM}}) &\geq  \text{p}_{\min}(f_o,f_i,p_o,p_i),
\end{split}
\ee
where the functions $\text{f}_{\min}(f_o,f_i,p_o,p_i)$ and $\text{p}_{\min}(f_o,f_i,p_o,p_i)$ are defined as the results of simple numerical minimization in Eqs.~\eqref{eq app: fmin} and \eqref{eq app: pmin} respectively.}

\begin{proof}
By lemma~\ref{lemma2} Eqs.~(\ref{eq: Fi app},\ref{eq: Pi app}) gurantee that 
\be \label{eq: 92}
\varrho_{AB} = u \otimes V [\Phi_\lambda]
\ee
for a qubit unitary $u$, a CPTP map $V$, and the state $\ket{\Phi_\lambda} =\sqrt{\lambda} \ket{00} + \sqrt{1-\lambda} \ket{11}$ with
\be
\frac{1}{2}\leq \lambda \leq \lambda_i = \begin{cases} 
1-\left(\frac{1}{2} -\sqrt{f_i(1-f_i)}\right) p_i & f_i\geq \frac{1}{2} \\
1 & f_i < \frac{1}{2}
\end{cases}.
\ee

For the moment let us view $\lambda$ as a fixed value. Eq.~\eqref{eq: 92} allows us to rewrite the bounds~(\ref{eq: Fo app},\ref{eq: Po app}) as
\begin{align}
    \bra{\Phi^+}(\t{id}\otimes u^T \circ  \bar{\Lambda}_{B}^{(o)} \circ \widetilde{\t{QM}} \circ V) [\Phi_\lambda] \ket{\Phi^+} &= P_o F_o \\
    \tr (\t{id}\otimes u^T \circ  \bar{\Lambda}_{B}^{(o)} \circ \widetilde{\t{QM}} \circ V)[\Phi_\lambda]  = P_o.
\end{align}
For short we define the CPTN map $\bar{\cE} \equiv u^T \circ  \bar{\Lambda}_{B}^{(o)} \circ \widetilde{\t{QM}} \circ V$ and rewrite the above bounds as 
\begin{align} \label{eq: 96}
   \tr (\t{id}\otimes \bar \cE) [\Phi_\lambda] \, \Phi^+ &= P_o F_o \\
   \label{eq: 97}
    \tr (\t{id}\otimes \bar \cE)[\Phi_\lambda]  &= P_o.
\end{align}
This is close to what we want. There is however an important difference -- for the quantities $\cF_{\t{id}}^{(\bar{\Lambda},V)}(\widetilde{\t{QM}})$ and   $P_{\checkmark}^{(\bar \Lambda,V)}(\widetilde{\t{QM}})$ the map $\bar \cE$ is to act on $\Phi^+$ and not $\Phi_\lambda$.

Nevertheless, the two states are related by the following qubit CPTN map $K_A$ acting on Alice's side
\be\label{eq: 98}
\Phi_\lambda = \frac{(K_A\otimes\t{id})[\Phi^+]}{\tr (K_A\otimes\t{id})[\Phi^+]}
\ee
where the $K_A\otimes \t{id} : \rho \mapsto T_A \rho T_A$ with
\be
T_A=\left(\begin{array}{cc}
1 & \\
& \sqrt{t} 
\end{array}\right)\otimes \mathds{1}
\ee
for $t = \frac{1-\lambda}{\lambda}$ and
\be\label{eq: 100}
\tr (K_A\otimes\t{id})[\Phi^+] = \frac{1+t}{2} .
\ee

Let us now consider the following operator 
\be
\sigma = (\t{id}\otimes \bar \cE)[\Phi^+].
\ee
It is the Choi state of the map $\bar \cE$, and  satisfies 
\be \label{eq: choi proba}\begin{split}
&\sigma_A = \tr_B \sigma \preceq \frac{1}{2} \id \\
& \sigma \succeq 0
\end{split}\ee
by construction. Furthermore by Eqs.~(\ref{eq: 98},\ref{eq: 100}) we have
\be\begin{split}
T_A \sigma T_A &=(K_A\otimes \t{id})[\sigma]\\
&=(K_A\otimes \bar \cE)[\Phi^+] 
\\
&=(\t{id}\otimes \bar \cE)\circ (K_A\otimes \t{id})[\Phi^+]\\
&= \frac{1+t}{2} (\t{id}\otimes \bar \cE)[\Phi_\lambda],
\end{split}
\ee
which implies 
\be\label{eq: bounds boudns}\begin{split}
\tr T_A \sigma T_A &=\frac{1+t}{2} P_o \\
\bra{\Phi^+} T_A \sigma T_A \ket{\Phi^+} &=  \frac{1+t}{2} F_o P_o.
\end{split}
\ee
by virtue of Eqs.~(\ref{eq: 96},\ref{eq: 97}). One  sees that the state $\frac{(K_A\otimes \bar \cE)[\Phi^+]}{\tr (K_A\otimes \bar \cE)[\Phi^+]}$ has singlet fidelity $F_o$. However, it can not be used to certify the memory because it requires a filtering to be applied on Alice's system.

Instead let us now consider the state

\be
\varrho=(\t{id}\otimes K_B)[\sigma] = T_B \sigma T_B = (\t{id}\otimes K_B\circ\bar \cE)[\Phi^+]
\ee
with
\be
T_B = \id \otimes \left(\begin{array}{cc}
1 & \\
& \sqrt{t} 
\end{array}\right).
\ee
This is the Choi state of the CPTN map $K_B \circ \bar \cE = K_B \circ u^T \circ  \bar{\Lambda}_{B}^{(o)} \circ \widetilde{\t{QM}} \circ V$. For the choice of the probabilistic extraction map $\bar \Lambda \equiv K_B \circ u^T \circ  \bar{\Lambda}_{B}^{(o)}$ we obtain
\be
\varrho = (\t{id}\otimes \bar \Lambda \circ \widetilde{\t{QM}}  \circ V)[\Phi^+],
\ee
so that the following quantities
\be\begin{split}
A &= \tr\Phi^+\,  \varrho  =\bra{\Phi^+} T_B\sigma T_B \ket{\Phi^+}\\
B &=\tr \varrho = \tr  T_B \sigma T_B, 
\end{split}
\ee
imply a certification of the memory of the form
\begin{align}
\cF_{\t{id}}^{(\bar{\Lambda},V)}(\widetilde{\t{QM}}) &\geq \frac{A}{B}  \\
     P_{\checkmark}^{(\bar \Lambda,V)}(\widetilde{\t{QM}}) &\geq B.
\end{align}
Now noting that $T_B \ket{\Phi^+} = T_A \ket{\Phi^+}$ we get that
\be
A =\bra{\Phi^+} T_A \sigma T_A \ket{\Phi^+}= \frac{1+t}{2} F_o P_o \geq \frac{1+t}{2} f_o p_o
\ee
from Eq.~\eqref{eq: bounds boudns}.

It remains to bound $B$. Using the inequalities~(\ref{eq: bounds boudns},\ref{eq: choi proba}), both the minimal and the maximal possible value of $B$ can be found as solutions of an SDP.  In particular, the maximal value that $B$ can take can be expressed as the following SDP: 
\be\begin{split}\label{eq:SDPmax}
B_\t{max}(f_o,p_o,t)\equiv \max_\sigma \, &\tr T_ B\sigma  T_B \\
\t{s.t.}\, &\tr \sigma T_A \Phi^+ T_A \geq \frac{1+t}{2} f_o p_o\\
& \tr T_A \sigma  T_A \geq \frac{1+t}{2} p_o  \\
&\sigma_A = \tr_B \sigma \preceq \frac{1}{2} \id \\
& \sigma \succeq 0.
\end{split}
\ee
Analogously, the minimal value that $B$ can take can be expressed in an SDP form as: 
\be\begin{split}\label{eq:SDPmin}
B_\t{min}(f_o,p_o,t)\equiv \min_\sigma \, &\tr T_ B\sigma  T_B \\
\t{s.t.}\, &\tr \sigma T_A \Phi^+ T_A \geq \frac{1+t}{2} f_o p_o\\
& \tr T_A \sigma  T_A \geq \frac{1+t}{2} p_o  \\
&\sigma_A = \tr_B \sigma \preceq \frac{1}{2} \id \\
& \sigma \succeq 0.
\end{split}
\ee

This allows us to conclude that the values $\cF_{\t{id}}^{(\bar{\Lambda},V)}(\widetilde{\t{QM}})$ and $ P_{\checkmark}^{(\bar \Lambda,V)}(\widetilde{\t{QM}})$ satisfy
\begin{align}
\label{eq app: 122} \cF_{\t{id}}^{(\bar{\Lambda},V)}(\widetilde{\t{QM}}) &\geq \frac{1+t}{2}\frac{f_o p_o}{B}  \\
\label{eq app: 123} 
     P_{\checkmark}^{(\bar \Lambda,V)}(\widetilde{\t{QM}}) &\geq B.
\end{align}
for some $B\in[B_\t{min}(f_o,p_o,t), B_\t{max}(f_o,p_o,t)]$, and some $t\in \left[\frac{1-\lambda_i}{\lambda_i},1\right]$.

Simpler but relaxed bounds on the fidelity and success probability of the memory can be obtained by minimizing the expressions at the right hand side of Eqs.~(\ref{eq app: 122},\ref{eq app: 123}) independently. We then obtain
\be
\begin{split}
\cF_{\t{id}}^{(\bar{\Lambda},V)}(\widetilde{\t{QM}}) &\geq \text{f}_{\min}(f_o,f_i,p_o,p_i) \\
P_{\checkmark}^{(\bar \Lambda,V)}(\widetilde{\t{QM}}) &\geq \text{p}_{\min}(f_o,f_i,p_o,p_i) 
\end{split}
\ee
for the functions
\begin{align}\label{eq app: fmin}
 \text{f}_{\min}(f_o,f_i,p_o,p_i) & \equiv \min_{t \in \left[\frac{1-\lambda_i}{\lambda_i},1\right]} \frac{(1+t)f_o p_o}{ 2 B_\t{max}(f_o,p_o,t)} \\
 \label{eq app: pmin}
    \text{p}_{\min}(f_o,f_i,p_o,p_i) & \equiv \min_{t \in \left[\frac{1-\lambda_i}{\lambda_i},1\right]}  B_\t{min}(f_o,p_o,t),
\end{align}
as stated in the result.
\end{proof}

\end{document}